\definecolor {infocolor} {rgb} {0.6,0.6,0.6}
\newtheorem{theorem}{Theorem}
\newtheorem{lemma}{Lemma}
\newtheorem{proposition}{Proposition}
\newenvironment {claim}
{\noindent {\em Claim.}\ } {\smallskip}
\newcommand {\etal} {\textit {et al.}}
\newtheorem {observation}[theorem]{Observation}
\newenvironment {claimproof}{\noindent \textit{Proof of Claim:}}{\hfill \ensuremath {\triangle}\medskip}
\DeclareRobustCommand{\bfseries}{%
   \not@math@alphabet\bfseries\mathbf
   \fontseries\bfdefault\selectfont
   \boldmath
}
\DeclareMathOperator{\suc}{succ}
\DeclareMathOperator{\outdeg}{out-deg}
\newcommand{\calR}{\mathcal{R}}
\title {Combinatorial Properties of Triangle-Free Rectangle Arrangements and the Squarability Problem}
  \author[1,3]{Jonathan Klawitter}
  \author[2]{Martin N\"ollenburg}
  \author[3]{Torsten Ueckerdt}
  \affil[1]{Institut f\"{u}r Theoretische Informatik, Karlsruhe Institute of Technology, Germany}
  \affil[2]{Algorithms and Complexity Group, TU Wien, Vienna, Austria}
  \affil[3]{Institut f\"{u}r Algebra und Geometrie, Karlsruhe Institute of Technology, Germany}
  \date{}
\newcounter{dummycount}
\newcommand{\wormholeLem}[1]
{
\newcounter{#1}
\setcounter{#1}{\value{lemma}}
}
\newcommand{\wormholeThm}[1]
{
\newcounter{#1}
\setcounter{#1}{\value{theorem}}
}
\newcommand{\wormholeProp}[1]
{
\newcounter{#1}
\setcounter{#1}{\value{proposition}}
}
\newenvironment{backInTimeLem}[1]
{
\setcounter{dummycount}{\value{lemma}}
\setcounter{lemma}{\value{#1}}
}
{
\setcounter{lemma}{\value{dummycount}}
}
\newenvironment{backInTimeThm}[1]
{
\setcounter{dummycount}{\value{theorem}}
\setcounter{theorem}{\value{#1}}
}
{
\setcounter{theorem}{\value{dummycount}}
}
\newenvironment{backInTimeProp}[1]
{
\setcounter{dummycount}{\value{proposition}}
\setcounter{proposition}{\value{#1}}
}
{
\setcounter{proposition}{\value{dummycount}}
}
\begin{document}

\maketitle

\begin{abstract}
 We consider %
 arrangements of axis-aligned rectangles in the plane.
 A geometric arrangement specifies the coordinates of all rectangles, while a combinatorial arrangement specifies only the respective intersection type in which each pair of rectangles intersects. %
 First, we investigate combinatorial contact arrangements, i.e., arrangements of interior-disjoint rectangles, with a triangle-free intersection graph.
 We show that such rectangle arrangements are in bijection with the $4$-orientations of an underlying planar multigraph and prove that there is a corresponding geometric rectangle contact arrangement.
 Moreover, we prove that every triangle-free planar graph is the contact graph of such an arrangement.
 Secondly, we introduce the question whether a given rectangle arrangement has a combinatorially equivalent square arrangement. In addition to some necessary conditions and counterexamples, we show that rectangle arrangements pierced by a horizontal line are squarable under certain sufficient conditions.  
\end{abstract}

\section{Introduction}

We consider arrangements of axis-aligned rectangles and squares in the plane.
Besides \emph{geometric rectangle arrangements}, in which all rectangles are given with coordinates, we are also interested in \emph{combinatorial rectangle arrangements}, i.e., equivalence classes of combinatorially equivalent arrangements.
Our contribution is two-fold.

First we consider maximal (with a maximal number of contacts) combinatorial rectangle contact arrangements, in which no three rectangles share a point.
For rectangle arrangements this is equivalent to the contact graph being \emph{triangle-free}, unlike, e.g., for triangle contact arrangements.
We prove a series of analogues to the well-known maximal combinatorial triangle contact arrangements and to Schnyder realizers.
The contact graph $G$ of a maximal triangle contact arrangement is a maximal planar graph.
A \emph{$3$-orientation} is an orientation of the edges of a graph $G'$, obtained from $G$ by adding six edges (two at each outer vertex), in which every vertex has exactly three outgoing edges.
Each outer vertex has two outgoing edges that end in the outer face without having an endpoint there.
A \emph{Schnyder realizer}~\cite{s-epgg-90,schnyderposet89} is a $3$-orientation of $G'$ together with a coloring of its edges with colors $0,1,2$ such that every vertex has exactly one outgoing edge in each color and incoming edges are colored in the color of the ``opposite'' outgoing edge.
The three outgoing edges represent the three corners of a triangle and the color specifies the corner, see Fig.~\ref{fig:Schnyder}.
De Fraysseix~\etal~\cite{FraysseixTContact} proved that the maximal combinatorial triangle contact arrangements of $G$ are in bijection with the $3$-orientations of $G'$ and the Schnyder realizers of $G'$.
Schnyder proved that for every maximal planar graph $G$, $G'$ admits a Schnyder realizer and hence $G$ is a triangle contact graph.

\begin{figure}[tb]
 \centering
 \includegraphics{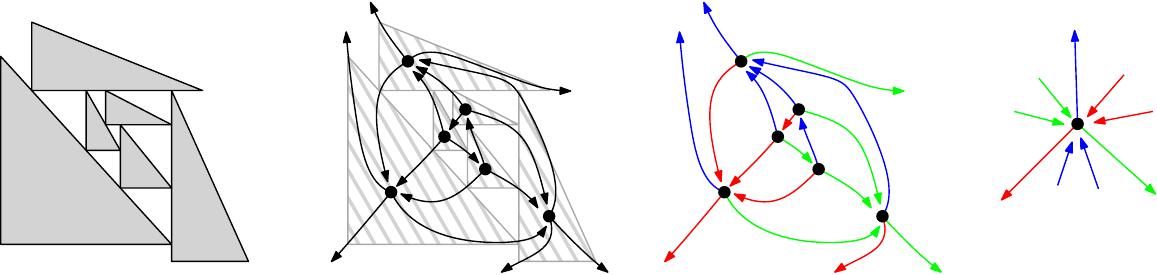}
 \caption{Left to right: Maximal combinatorial contact arrangement with axis-aligned triangles, no three sharing a point. $3$-orientation of $G'$. Schnyder realizer of $G'$. Local coloring rules for Schnyder realizer.}
 \label{fig:Schnyder}
\end{figure}

In this paper we prove an analogous result, which, roughly speaking, is the following.
We consider maximal triangle-free combinatorial rectangle contact  arrangements.
The corresponding contact graph $G$ is planar with all faces of length $4$ or $5$.
We define an underlying plane multigraph $\bar{G}$, whose vertex set also includes a vertex for each inner face of the contact graph, and define \emph{$4$-orientations} of $\bar{G}$.
Here, every vertex has exactly four outgoing edges, where each outer vertex has two edges ending in the outer face.
For a $4$-orientation we introduce \emph{corner-edge-labelings} of $\bar{G}$, which are, similar to Schnyder realizers, colorings of the outgoing edges at vertices of $\bar{G}$ corresponding to rectangles with colors $0,1,2,3$ satisfying certain local rules.
Each outgoing edge represents a corner of a rectangle and the color specifies which corner it is, see Fig.~\ref{fig:our-structures}.
We then prove that the combinatorial contact arrangements of $G$ are in bijection with the $4$-orientations of $\bar{G}$ and the corner-edge-labelings of $\bar{G}$.
We also prove that for every maximal triangle-free planar graph $G$, $\bar{G}$ admits a $4$-orientation and hence that $G$ is a  rectangle contact graph.

\begin{figure}[tb]
 \centering
 \includegraphics{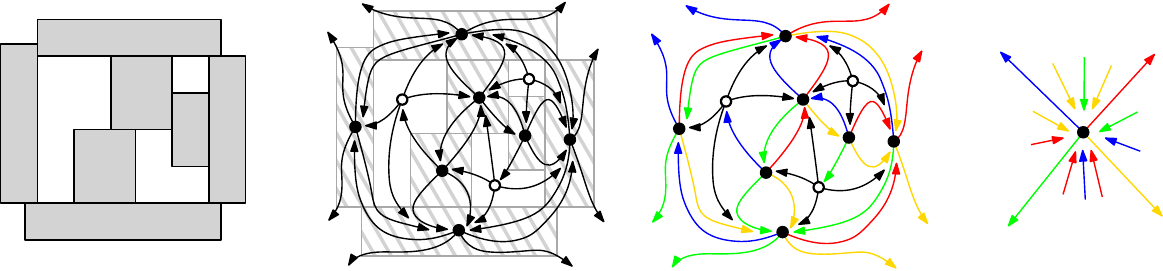}
 \caption{Left to right: Maximal combinatorial contact arrangement with axis-aligned rectangles, no three sharing a point. $4$-orientation of underlying graph. Corner-edge-labeling of underlying graph. Local coloring rules for corner-edge-labeling.}
 \label{fig:our-structures}
\end{figure}

Our second result is concerned with the question whether a given geometric rectangle arrangement can be transformed into a combinatorially equivalent square arrangement.
The similar question whether a pseudocircle arrangement can be transformed into a combinatorially equivalent circle arrangement has recently been studied by Kang and Müller~\cite{km-apc-14}, who showed  (among other results) that the problem is NP-hard.
We say that a rectangle arrangement can be \emph{squared} (or is \emph{squarable}) if an equivalent square arrangement exists.
Obviously, squares are a very restricted class of rectangles and  hence it does not come as a surprise that 
not every rectangle arrangement can be squared.
The natural open question is to characterize the squarable rectangle arrangements and to answer the complexity status of the corresponding decision problem.
As a first step towards solving these questions, we show, on the one hand, some general necessary conditions  for squarability  and, on the other hand, sufficient conditions implying that certain subclasses of rectangle arrangements are always squarable.

\paragraph{Related Work.}

Intersection graphs and contact graphs of axis-aligned rectangles or squares in the plane are a popular, almost classic, topic in discrete mathematics and theoretical computer science with lots of applications in computational geometry, graph drawing and VLSI chip design.
Most of the research for rectangle intersection graphs concerns their recognition~\cite{Yan83}, colorability~\cite{AG60} or the design of efficient algorithms such as for finding maximum cliques~\cite{IA83}.
On the other hand, rectangle contact graphs are mainly investigated for their combinatorial and structural properties.
Almost all the research here concerns edge-maximal $3$-connected rectangle contact graphs, so called \emph{rectangular duals}.
These can be characterized by the absence of separating triangles~\cite{kk-rdpg-85,Ung53} and the corresponding representations by touching rectangles can be seen as dissections of a rectangle into rectangles.
Combinatorially equivalent dissections are in bijection with regular edge labelings~\cite{KanHe97} and transversal structures~\cite{Fus09}.
The question whether a rectangular dual has a rectangle dissection in which all rectangles are squares has been investigated by Felsner~\cite{FelsnerSurvey}.

\section{Preliminaries}

In this paper a \emph{rectangle} is an axis-aligned rectangle in the plane, i.e., the cross product $[x_1,x_2] \times [y_1,y_2]$ of two bounded closed intervals.
A \emph{geometric rectangle arrangement} is a finite set $\calR$ of rectangles; it is a \emph{contact arrangement} if any two rectangles have disjoint interiors.
In a contact arrangement, any two non-disjoint rectangles $R_1, R_2$ have one of the two contact types \emph{side contact} and \emph{corner contact}, see Fig.~\ref{fig:intersection-types} (left); we exclude the degenerate case of two rectangles sharing only one point.
If $\calR$ is not a contact arrangement, four intersection types are possible: \emph{side piercing}, \emph{corner intersection}, \emph{crossing}, and \emph{containment}, see Fig.~\ref{fig:intersection-types} (right).
Note that side contact and corner contact are degenerate cases of side piercing and corner intersection, whereas crossing and containment have no analogues in contact arrangements.
If no two rectangles form a crossing, we say that $\calR$ is \emph{cross-free}.
Moreover, in each type (except containment) it is further distinguished which sides of the rectangles touch or intersect.

\begin{figure}[tb]
 \centering
 \includegraphics[page=2]{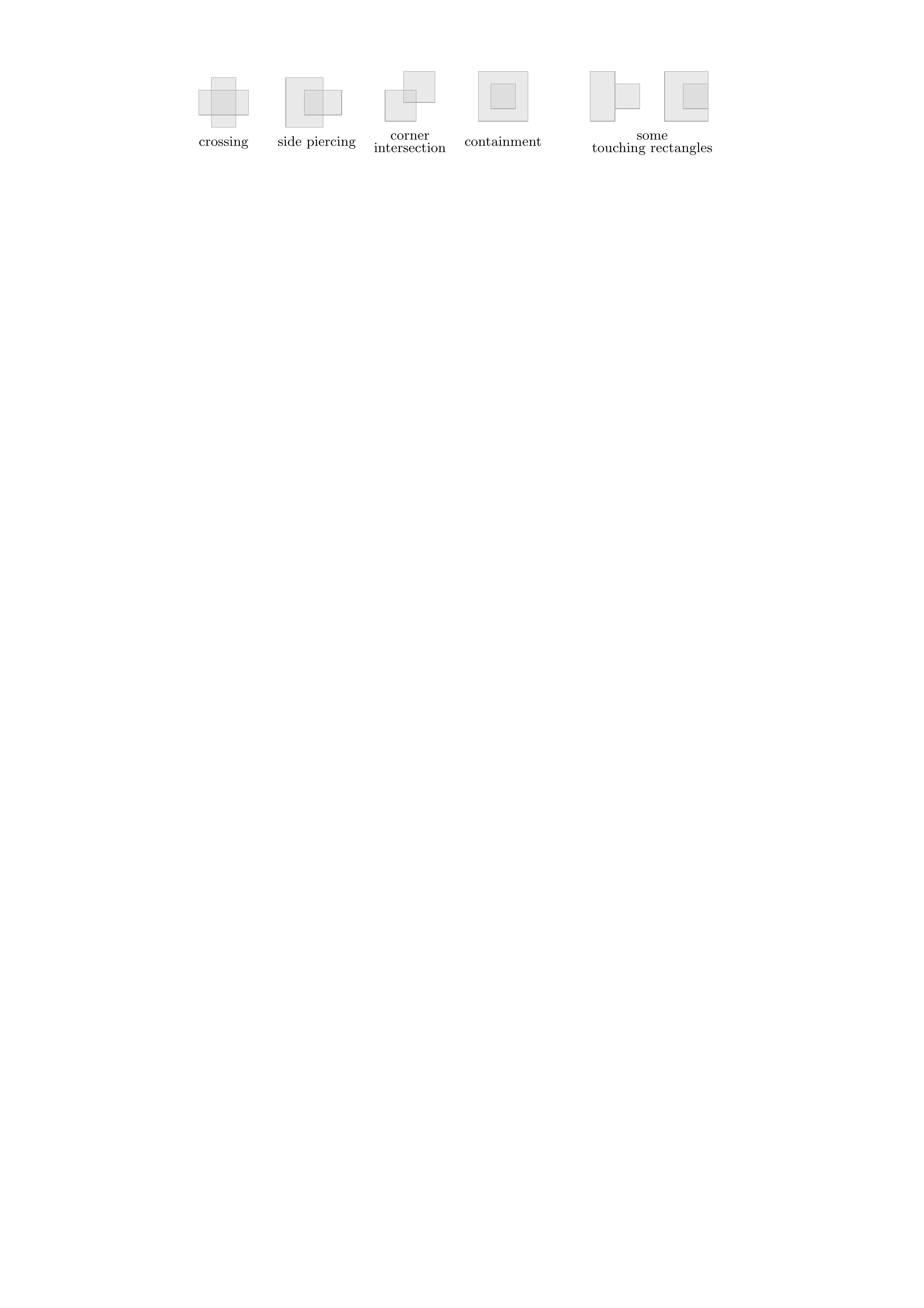}
 \caption{Contact types (left) and intersection types (right) of rectangles.}
 \label{fig:intersection-types}
\end{figure}

Two rectangle arrangements $\calR_1$ and $\calR_2$ are \emph{combinatorially equivalent} if $\calR_1$ can be continuously deformed into $\calR_2$ such that every intermediate state is a rectangle arrangement with the same intersection or contact type for every pair of rectangles. %
An equivalence class of combinatorially equivalent arrangements is called a \emph{combinatorial rectangle arrangement}.
So while a geometric arrangement specifies the coordinates of all rectangles, think of a combinatorial arrangement as specifying only the way in which any two rectangles touch or intersect.
In particular, a combinatorial rectangle arrangement is defined by
\textbf{(1)} for each rectangle $R$ and each side of $R$ the counterclockwise order of all intersecting (touching) rectangle edges, labeled by their rectangle $R'$ and the respective side of $R'$ (top, bottom, left, right),
\textbf{(2)} for containments the respective component of the arrangement, in which a rectangle is contained. %

In the \emph{intersection graph} of a rectangle arrangement there is one vertex for each rectangle and two vertices are adjacent if and only if the corresponding rectangles intersect.
As combinatorially equivalent arrangements have the same intersection graph,  combinatorial arrangements themselves have a well-defined intersection graph.
For  rectangle contact arrangements (combinatorial or geometric) the intersection graph is also called the \emph{contact graph}.
Note that such contact graphs are planar, as we excluded the case of four rectangles meeting in a corner.

\section{Statement of Results}

\subsection{Maximal triangle-free planar graphs and rectangle contact arrangements}
\label{subsec:triangle-free}

We consider so-called \emph{MTP-graphs}, that is, (M)aximal (T)riangle-free (P)lane graphs with a quadrangular outer face.
Note that each face in such an MTP-graph is a $4$-cycle or $5$-cycle, and that every plane triangle-free graph is an induced subgraph of some MTP-graph.
Given an MTP-graph $G$ a rectangle contact arrangement of $G$ is one whose contact graph is $G$, where the embedding inherited from the arrangement is the given embedding of $G$, and where each outer rectangle has two corners in the unbounded region\footnote{Other configurations of the outer four rectangles can be easily derived from this.}.
We define the closure, $4$-orientations and corner-edge-labelings:

\medskip

 \begin{compactdesc}
  \item[The \emph{closure}] $\bar{G}$ of $G$ is derived from $G$ by replacing each edge of $G$ with a pair of parallel edges, called an \emph{edge pair}, and adding into each inner face $f$ of $G$ a new vertex, also denoted by $f$, connected by an edge, called a \emph{loose edge}, to each vertex incident to that face.
  At each outer vertex we add two loose edges pointing into the outer face, although we do not add a vertex for the outer face.
  Note that $\bar{G}$ inherits a unique plane embedding with each inner face being a triangle or a $2$-gon.

  \item[A $4$-orientation] of $\bar{G}$ is an orientation of the edges and half-edges of $\bar{G}$ such that every vertex has outdegree exactly $4$.
  An edge pair is called \emph{uni-directed} if it is oriented consistently and \emph{bi-directed} otherwise.

  \item[A corner-edge-labeling] of $\bar{G}$ is a $4$-orientation of $\bar{G}$ together with a coloring of the outgoing edges of $\bar{G}$ at each vertex of $G$ with colors $0,1,2,3$ (see Fig.~\ref{fig:local-colors}) such that
   \begin{compactenum}[(i)]
    \item around each vertex $v$ of $G$ we have outgoing edges in color $0,1,2,3$ in this counterclockwise order and\label{enum:outgoing-edges}
    \item in the wedge, called \emph{incoming wedge}, at $v$ counterclockwise between the outgoing edges of color $i$ and $i+1$ there are some (possibly none) incoming edges colored $i+2$ or $i+3$, $i=0,1,2,3$, all indices modulo $4$.\label{enum:incoming-wedges}
   \end{compactenum}
 \end{compactdesc} 

\medskip
 
In a corner-edge-labeling the four outgoing edges at a vertex of $\bar{G}$ corresponding to a face of $G$ are not colored.
Further we remark that~\eqref{enum:outgoing-edges} implies that uni-directed pairs are colored $i$ and $i-1$, while~\eqref{enum:incoming-wedges} implies that bi-directed pairs are colored $i$ and $i+2$, for some $i \in \{0,1,2,3\}$, where all indices are considered modulo $4$.

The following theorem is proved in Sec.~\ref{sec:triangle-free}.

\begin{figure}[tb]
 \centering
 \includegraphics[scale=1]{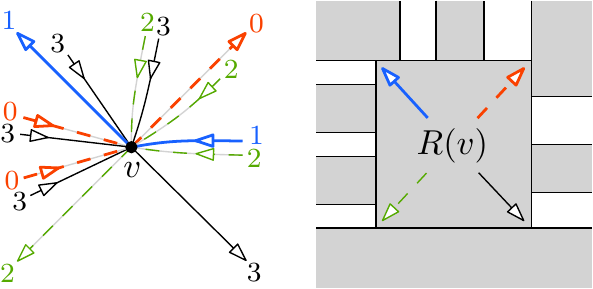} 
 \caption{Local color patterns in corner-edge-labelings of an MTP-graph at a vertex $v$, together with the corresponding part in a rectangle contact arrangement.}
 \label{fig:local-colors}
\end{figure}

\begin{theorem}\label{thm:triangle-free}
 Let $G$ be an MTP-graph, then each of the following are in bijection:
 \begin{compactitem}
  \item the combinatorial  rectangle  contact arrangements of $G$
  \item the corner-edge-labelings of $\bar{G}$
  \item the $4$-orientations of $\bar{G}$
 \end{compactitem}
\end{theorem}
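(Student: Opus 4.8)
The plan is to factor the three-way equivalence into two bijections, $(\text{combinatorial arrangements of }G) \leftrightarrow (\text{corner-edge-labelings of }\bar{G})$ and $(\text{corner-edge-labelings of }\bar{G}) \leftrightarrow (4\text{-orientations of }\bar{G})$, and to compose them. For the first bijection I would read a labeling off an arrangement: each rectangle $R_v$ has four corners, which I take in counterclockwise order as the four outgoing edges at $v$ and color $0,1,2,3$ by which corner they are, so that property~(i) holds by construction; each corner-edge is then directed to whatever blocks that corner, namely the rectangle whose side it touches or, if the corner is free, the incident face-vertex. The incoming edges at $v$ are exactly the corners of the rectangles (and the face-corners) touching the sides of $R_v$, and a corner touching the side of $R_v$ between its corners $i$ and $i+1$ points toward $R_v$ and hence has color $i+2$ or $i+3$, which is property~(ii). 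Conversely, a corner-edge-labeling records for each rectangle the cyclic order of its corners together with the counterclockwise order of the contacts along each of its four sides, which is precisely the data of a combinatorial arrangement; the inverse map is then obtained by checking that (i)--(ii) are exactly the conditions under which such local data assemble into a (unique) combinatorial arrangement of $G$.

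The second bijection is the combinatorial heart. Forgetting the colors sends a labeling to a $4$-orientation, so what remains is to show that each $4$-orientation of $\bar{G}$ carries exactly one valid coloring. Uniqueness is the cleaner half: by~(i) two valid colorings of the same orientation differ, at each vertex $v$ of $G$, by a global rotation $\rho_v\in\mathbb{Z}_4$ of its four outgoing colors, and examining the bigon of an edge pair between $u$ and $v$ shows that~(ii) makes the two ends rigid — for a bi-directed pair the two outgoing colors must differ by exactly $2$, and for a uni-directed pair the wedge of $v$ absorbing the pair is forced to have index two larger than the lower color of the pair at $u$. Either relation forces $\rho_u=\rho_v$; since $G$ is connected $\rho_v$ is constant, and the convention placing two corners of every outer rectangle in the unbounded face pins this constant to $0$. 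Hence a valid coloring is unique if it exists.

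Existence is where I expect essentially all the difficulty to lie, because the color of an incoming edge is fixed at its tail, so realizing~(ii) in every wedge is a global consistency statement rather than a free choice. My plan is to turn the rigid relations above into a propagation rule: across each edge pair one endpoint's frame determines the other's uniquely, so fixing the colors along the outer boundary propagates a frame to every vertex of $G$, provided the propagation is consistent around closed walks. By planarity it suffices to verify consistency around each bounded face, and I would do this by a finite case analysis over the faces of $\bar{G}$ — the bigons coming from edge pairs and the triangles $f u v$ formed by a face-vertex $f$ together with an edge of a pair — where the defining outdegree-$4$ condition, in particular at each face-vertex (it dictates how many corners of the surrounding rectangles point into a $4$- or $5$-face), is exactly what makes the local checks close up. Once a consistent frame is in hand I would finally confirm that every incoming edge, including the loose edges at the face-vertices, lands in a wedge of the correct color, so that~(ii) holds globally; with both bijections established, the composition and the remaining round-trip checks are routine.
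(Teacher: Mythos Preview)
Your factoring into two direct bijections differs from the paper's approach, which establishes a three-step cycle (arrangements $\to$ $4$-orientations $\to$ labelings $\to$ arrangements, via Lemmas~\ref{lem:rectangles-to-orientation},~\ref{lem:orientation-to-labeling},~\ref{lem:labeling-to-rectangles}) and hence all three bijections at once. Two steps in your plan have genuine gaps.

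\textbf{Labelings $\to$ arrangements.} You say the inverse ``is obtained by checking that (i)--(ii) are exactly the conditions under which such local data assemble into a combinatorial arrangement.'' But a combinatorial arrangement is by definition an equivalence class of \emph{geometric} arrangements, so you must actually produce rectangles realizing the prescribed contacts. This is the content of the paper's Lemma~\ref{lem:labeling-to-rectangles}: the contacts dictated by a labeling are encoded as a system of equalities and strict inequalities among the side-coordinates, and feasibility is proved by building an auxiliary planar constraint graph $I_x$ and showing it contains no semi-oriented cycle. Nothing in your sketch addresses why such a system is consistent; it is not a tautology.

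\textbf{Orientations $\to$ labelings, existence.} Your uniqueness argument via rigid propagation of frames is correct, but the consistency check is aimed at the wrong faces. The frames $\rho_v$ live on vertices of $G$ and propagate along edge pairs, i.e., along edges of $G$; so the holonomy you must kill lives on the facial $4$- and $5$-cycles of $G$, not on the bigons and $fuv$-triangles of $\bar{G}$. A bigon is a single edge pair and carries no holonomy at all, and an $fuv$-triangle involves a face-vertex $f$, which carries no frame to propagate through. Even after relocating the check to faces of $G$, the vanishing of the holonomy around each $4$- or $5$-cycle is precisely where the outdegree-$4$ hypothesis must be used, and you have not indicated how. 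The paper proceeds quite differently and globally: it first proves a counting lemma (Lemma~\ref{lem:pendant-at-cycle}: every simple cycle $C$ of $G$ has exactly $|V(C)|-4$ edges of $\bar{G}$ pointing inside it), then colors each edge by following a \emph{successor walk} from it until it exits through one of the eight outer half-edges, using the counting lemma to show these walks are acyclic and to control how two walks from the same vertex can meet. That counting lemma is, in effect, the holonomy statement your propagation argument is missing.
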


Using the bijection between $4$-orientations of $\bar{G}$ and combinatorial rectangle contact arrangements of $G$ given in Thm.~\ref{thm:triangle-free}, we can show that every MTP-graph $G$ is a  rectangle contact graph, which is the statement of the next theorem; its proof is found in Sec.~\ref{sec:MPT-has-orientation}. 

\wormholeThm{thm:MPT-has-orientation}
\begin{theorem}\label{thm:MPT-has-orientation}
 Every MTP-graph has a  rectangle contact arrangement and it can be computed in linear time.
\end{theorem}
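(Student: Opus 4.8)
The plan is to use Theorem~\ref{thm:triangle-free}: since the combinatorial rectangle contact arrangements of $G$ are in bijection with the $4$-orientations of $\bar{G}$, it suffices to show that $\bar{G}$ admits a $4$-orientation and that one can be found in linear time; the bijection, together with the realizability of combinatorial arrangements established alongside Theorem~\ref{thm:triangle-free}, then delivers the arrangement. To treat the loose half-edges at the outer vertices uniformly, I would first add a single vertex $v_\infty$ for the outer face and turn the eight outer half-edges into ordinary edges oriented away from their outer vertex, obtaining an enlarged plane multigraph $\bar{G}^{+}$ in which $v_\infty$ carries no outdegree demand while every other vertex still demands outdegree exactly $4$. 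A $4$-orientation of $\bar{G}$ is then exactly an orientation of $\bar{G}^{+}$ meeting these demands, and I would invoke the classical criterion (Hakimi) that such an orientation exists if and only if \textbf{(a)} the total demand equals the number of edges, and \textbf{(b)} for every vertex set $S$ the number of induced edges is at most the demand summed over $S$.

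Condition \textbf{(a)} is a direct Euler computation: $\bar{G}$ has $n+F$ vertices, where $F=m-n+1$ is the number of inner faces of $G$; doubling every edge, adding one loose edge per face-incidence, and adding the eight outer half-edges produces $4m+4$ edges and half-edges in total, exactly matching the total demand $4(n+F)$. The heart of the argument is condition \textbf{(b)}, and this is where I expect the real work to lie. Writing $S=S_A\cup S_B$ with $S_A$ the original vertices and $S_B$ the face-vertices together with possibly $v_\infty$, the induced edges are $2e_G(S_A)$ edge-pair edges plus the loose edges reaching into $S_A$. Since only a face-vertex of a $5$-face can demand more loose edges ($5$) than its budget ($4$), maximizing over $S_B$ reduces everything to the single inequality, for all $U\subseteq V(G)$,
\[
 2e_G(U)+t(U)\le 4\lvert U\rvert,
\]
where $e_G(U)$ is the number of edges of $G$ inside $U$ and $t(U)$ counts the inner $5$-faces of $G$ all of whose vertices lie in $U$ (the places where a face-vertex contributes a net surplus).

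I would prove this inequality from Euler's formula applied to the induced plane subgraph $G[U]$ together with triangle-freeness. Every bounded face of $G[U]$ has length at least $4$, and each fully contained $5$-face survives as a pentagon, since a chord would create a triangle; hence $2e_G(U)\ge 4\phi+t(U)$ for the face count $\phi$, and substituting $\phi=e_G(U)-\lvert U\rvert+2$ yields $2e_G(U)+t(U)\le 4\lvert U\rvert-8$ in the generic (say $2$-connected, large) case. The surplus of $8$ is precisely what the up to eight edges charged to $v_\infty$ consume, so condition \textbf{(b)} holds, and one checks it is tight when $S$ is everything. The care needed is to handle disconnected, tree-like, or small $U$ separately, where the Euler slack of $8$ is unavailable but the edge count is so small that $2e_G(U)+t(U)$ plus the incident outer half-edges stays below $4\lvert U\rvert$ by direct inspection; assembling these cases blockwise is the main technical obstacle, though the estimates themselves are routine.

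For the linear-time claim I would not rely on a generic max-flow solver but build the $4$-orientation constructively. In the spirit of the Schnyder analogy emphasized in the introduction, I would process the MTP-graph through a canonical-ordering-type shelling, peeling a vertex or a face off the outer boundary, recursing on the smaller graph, and extending the partial $4$-orientation when the removed piece is reinserted; the slack established above is what guarantees each reinsertion completes locally without global backtracking, giving an overall linear running time. Designing this incremental procedure so that every local step is both correct and amortized $O(1)$ is the principal algorithmic difficulty; the existence of the orientation, once condition \textbf{(b)} is in hand, is immediate from the Hakimi criterion.
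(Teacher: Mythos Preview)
Your existence argument via Hakimi's orientation criterion is correct and takes a genuinely different route from the paper. The paper proceeds by induction on $|V(G)|$: if $G$ has an inner $4$-face it identifies two opposite vertices across that face (choosing a pair at distance at least $4$ in $G$ minus the face boundary, so the result stays triangle-free); otherwise it contracts an edge lying on no $4$-cycle. In either case the resulting graph is again an MTP-graph, the inductive $4$-orientation of its closure is lifted back to $\bar{G}$ by reversing a bounded number of length-$2$ directed paths, and the recursion has depth $n$, so the linear-time bound comes for free from the construction itself. Your approach instead isolates a single density inequality---essentially $2e_G(U)+t(U)+2o(U)\le 4|U|$ (your displayed inequality drops the $2o(U)$ term, but you correctly account for it afterwards via the Euler surplus of $8$)---derives it from triangle-freeness and Euler's formula, and then invokes Hakimi. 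This is conceptually cleaner: it pins the existence of a $4$-orientation on exactly the sparsity that triangle-freeness forces, and the computation $2e_G(U)+t(U)\le 4|U|-4c-\ell$ (with $c$ components and outer-face length $\ell$) makes the role of the outer half-edges transparent. The paper's contraction scheme is more ad hoc but immediately algorithmic. Where your proposal is weakest is the linear-time claim: the ``canonical-ordering-type shelling'' you sketch is not what the paper does and would need real work to make precise, whereas the paper's two-case contraction already \emph{is} the linear-time procedure. If you keep the Hakimi argument for existence, you could simply borrow the paper's contraction for the algorithm.
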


We remark that our technique in the proof of Thm.~\ref{thm:triangle-free} constructs in linear time a geometric rectangle contact arrangement in the $2n \times 2n$ square grid, where $n$ is the number of vertices in the graph.
Thus also the rectangle contact arrangement in Thm.~\ref{thm:MPT-has-orientation} uses only on linear-size grid.

\subsection{Squarability and line-pierced rectangle arrangements}
\label{subsec:squarability}

In the squarability problem, we are given a rectangle arrangement $\calR$ and want to decide whether $\calR$ can be squared, i.e., whether it has a combinatorially equivalent square arrangement.
The first observation is that there are obvious obstructions to the squarability of a rectangle arrangement. If any two rectangles in $\calR$ are crossing (see Fig.~\ref{fig:intersection-types}) then there are obviously no two combinatorially equivalent squares.

But even if we restrict ourselves to cross-free rectangle arrangements, we can find unsquarable configurations.
One such arrangement is depicted in Fig.~\ref{fig:unsquarable}~(left).
To get an unsquarable arrangement with a triangle-free intersection graph, we can use the fact that two side-piercing rectangles translate immediately into a smaller-than relation for the corresponding squares: the side length of the square to pierce into the side of another square needs to be strictly smaller.
Hence any rectangle arrangement that contains a cycle of side-piercing rectangles cannot be squarable, see Fig.~\ref{fig:unsquarable}~(middle).
Moreover, we may even create a counterexample of a rectangle arrangement whose intersection graph is a path and that causes a geometrically infeasible configuration for squares, see Fig.~\ref{fig:unsquarable}~(right). %

\begin{figure}[tb]
 \centering
 \includegraphics[scale=0.8]{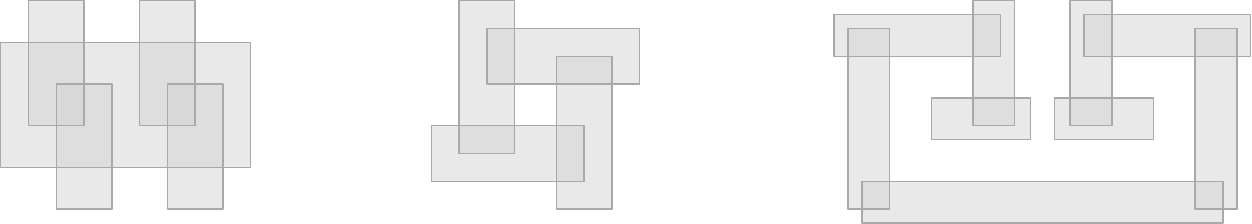}
 \caption{Three cross-free unsquarable rectangle arrangements.}
 \label{fig:unsquarable}
\end{figure}

\begin{proposition}\label{pro:unsquarable}
 Some cross-free rectangle arrangements are unsquarable, even if the intersection graph is a path.
\end{proposition}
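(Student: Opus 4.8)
The plan is to prove the proposition by explicitly constructing a rectangle arrangement whose intersection graph is a path and which admits no combinatorially equivalent square arrangement, as suggested by Fig.~\ref{fig:unsquarable}~(right). The key idea, already foreshadowed in the excerpt, is that side-piercings impose a strict size relation: if rectangle $R_i$ pierces a side of rectangle $R_{i+1}$, then in any equivalent square arrangement the square replacing $R_i$ must be strictly smaller than the square replacing $R_{i+1}$. Rather than closing this into a cycle (which would force $s_1 < s_2 < \cdots < s_1$, an immediate contradiction but already covered by the cycle example), the goal is to exhibit a \emph{path} of rectangles in which the combinatorial constraints on positions and side-orientations of the piercings force a geometric contradiction even though the underlying size inequalities alone are consistent.

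First I would fix a concrete path $R_1, R_2, \dots, R_k$ and record for each consecutive pair exactly which side of which rectangle is pierced and the relative vertical/horizontal placement of the contact, since by the definition of combinatorial equivalence in the Preliminaries these data are precisely what an equivalent arrangement must preserve. Second, I would translate each of these combinatorial piercing constraints into concrete inequalities among the unknown side lengths $s_1, \dots, s_k$ and the unknown corner coordinates of the squares: a side-piercing forces the strict size inequality noted above, and additionally the requirement that the piercing rectangle actually reaches \emph{across} and \emph{protrudes from} the pierced side forces inequalities coupling the positions and the side lengths (e.g.\ one square's span must contain an interval long enough for the neighboring contact, yet short enough to avoid a spurious contact with a non-adjacent rectangle). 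Third, I would chain these inequalities along the path and show that the system is infeasible: the size constraints and the positional constraints pull in opposite directions, so that after propagating from $R_1$ to $R_k$ one obtains a contradiction such as $s_1 < s_1$ or a required length being simultaneously bounded above and below by incompatible quantities.

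The last part of the argument is to verify that the chosen arrangement is genuinely realizable as a \emph{rectangle} arrangement (so that it is a legitimate instance of the squarability problem) and that its intersection graph is exactly a path with no extra adjacencies — in particular that non-consecutive rectangles are disjoint and that the arrangement is cross-free. For rectangles the extra degree of freedom in aspect ratio makes the arrangement easy to draw, so this direction is routine; the content is entirely in the impossibility for squares.

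The hard part will be designing the path so that the infeasibility is forced purely by the \emph{combinatorial} data and does not accidentally admit an alternative equivalent square arrangement that reroutes around the contradiction. Concretely, I must make sure that every inequality I use is actually dictated by a preserved piercing side or a preserved counterclockwise order of contacts along a side, and not by incidental features of my particular drawing; otherwise a clever square arrangement could satisfy all the preserved constraints while violating my derived inequality. The cleanest way to guarantee this is to keep the path short, pin down the combinatorial type of each contact (which side pierces which, and where along that side relative to the other contacts on the same rectangle), and then show that the preserved left-to-right or top-to-bottom ordering of contacts on a shared side, combined with the strict size inequalities, is already contradictory; once the inequalities are seen to be consequences only of preserved combinatorial data, infeasibility of the square system completes the proof.
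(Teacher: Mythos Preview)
Your proposal is correct and matches the paper's approach: the paper establishes the proposition simply by exhibiting the arrangement in Fig.~\ref{fig:unsquarable}~(right) and appealing to the preceding discussion that side-piercings force strict size inequalities, leaving the detailed inequality-chasing implicit. Your plan is a more explicit write-up of exactly that argument, with the same ingredients (a concrete path of side-piercing rectangles, translation of the preserved combinatorial contact data into inequalities on square side lengths and positions, and derivation of an infeasibility), so it is essentially the same proof carried out more formally.
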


Therefore we focus on a non-trivial subclass of rectangle arrangements that we call line-pierced.
A rectangle arrangement $\calR$ is \emph{line-pierced} if there exists a horizontal line $\ell$ such that $\ell \cap R \neq \emptyset$ for all $R \in \calR$.
The line-piercing strongly restricts the possible vertical positions of the rectangles in $\calR$, which lets us prove two sufficient conditions for squarability in the following theorem.

\begin{theorem}\label{thm:line-pierced}
        Let $\calR$ be a cross-free, line-pierced rectangle arrangement.
        \begin{compactitem}
                \item If $\calR$ is triangle-free, then $\calR$ is squarable.
                \item If $\calR$ has only corner intersections, then $\calR$ is squarable, even using line-pierced unit squares.
        \end{compactitem}
\end{theorem}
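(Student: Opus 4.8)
The plan rests on one structural observation that both parts share. Put the piercing line $\ell$ at height $0$, so every rectangle $R_i=[x_1^i,x_2^i]\times[y_1^i,y_2^i]$ in $\calR$ satisfies $y_1^i\le 0\le y_2^i$. Then two rectangles meet exactly when their $x$-projections overlap, since any common $x$ yields the common point $(x,0)$; hence the intersection graph of $\calR$ is the interval graph of the $x$-projections. Classifying how an overlapping pair meets according to whether the four $x$-coordinates interleave or nest and whether the tops/bottoms interleave or nest, one checks that $x$-nesting together with $y$-nesting is a containment, $x$-nesting with $y$-interleaving (or vice versa) is a side piercing, and $x$-interleaving with $y$-interleaving is a corner intersection; the only remaining case, $x$-nesting one way and $y$-nesting the other, is precisely a crossing, which cross-freeness forbids. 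The consequence I use repeatedly is the size relation: when $R_i$ and $R_j$ meet in a side piercing or a containment, the piercing (resp.\ inner) rectangle is strictly shorter in the direction in which it is nested, so in any square realization its side length must be strictly smaller, whereas a corner intersection forces no size relation. Contacts are the degenerate limits of these types and are handled by the same classification with non-strict relations. To prove squarability it suffices to build a square arrangement realizing the same combinatorial data, namely the left-to-right order of all vertical sides, the meeting type of each pair, and the order of the crossings along each side.

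For the first bullet, triangle-freeness is equivalent, by the Helly property of the $x$-intervals, to no point of $\ell$ lying in three rectangles, so the interval graph has depth at most two; being a triangle-free chordal graph it is a \emph{forest}. The forced strict ``smaller-than'' relations live on the edges of this forest, hence they are acyclic, and I first fix side lengths $s_i$ globally by topologically ordering the oriented edges and letting the sizes increase along them. I then place the squares by induction on $|\calR|$, peeling a leaf $R$ with unique neighbour $R'$. In a component with at least three rectangles a leaf never strictly contains $R'$: a third rectangle meeting $R'\subset R$ would also meet $R$, contradicting that $R$ is a leaf; so $R$ is attached to $R'$ by a corner intersection, by piercing $R'$, by being pierced by $R'$, or by lying inside $R'$. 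Having placed $\calR\setminus\{R\}$ with the prescribed sizes, I reattach $R$ as a single square of its assigned size at the correct corner or side of $R'$. The \emph{main obstacle} is guaranteeing that this reattachment creates no spurious intersection and reproduces the prescribed order of crossings along the shared side of $R'$; both are secured by the fact that in the original arrangement the $x$-projection of $R$ is disjoint from those of all rectangles other than $R'$, so by maintaining these $x$-disjointnesses as an invariant of the construction I keep an $x$-slab beside the relevant side of $R'$ free for $R$, regardless of whether the fixed size of $R$ is large or small.

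For the second bullet every meeting is a corner intersection, so the size relations above are vacuous, which is exactly what makes \emph{uniform} side lengths possible. Here the neighbours of any rectangle $R_0$ split into those overlapping it from the left and those from the right, and within each group the $x$-projections pairwise overlap near the corresponding side of $R_0$; thus the neighbourhood of every vertex is a union of two cliques, the intersection graph has no induced claw, and a claw-free interval graph is a unit interval graph. I therefore realize the $x$-structure by unit intervals taken in the order prescribed by the left sides, yielding unit squares $[a_i,a_i+1]\times[c_i,c_i+1]$; taking every $c_i\in(-1,0)$ keeps all squares pierced by $\ell$, and equal side lengths make every overlap a corner intersection automatically, with no containment or piercing possible. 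What remains is to choose the offsets $c_i$ so that each meeting has the prescribed higher/lower relation and each side reproduces the prescribed order of crossing heights; this ordering problem is the \emph{technical heart} of the second part, since on the right side of $R_i$ a higher neighbour contributes the crossing height $c_j$ and a lower neighbour the height $c_j+1$. I resolve it by propagating these order constraints along the indifference order, which is consistent because the original arrangement already exhibits a valid assignment and the constraints coming from distinct sides are locally interval-structured, the strict inequalities leaving enough slack to perturb into general position.
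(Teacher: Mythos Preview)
Your overall strategy parallels the paper's: both parts reduce to the interval-graph structure of the $x$-projections and then realize first the $x$-structure, then the $y$-structure. However, your Part~1 argument has a genuine gap in the size assignment.

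\medskip

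\textbf{Part 1.} The paper also recognizes that the intersection graph is a (forest of) caterpillars, places the backbone squares first with small overlaps~$\varepsilon$, and then---crucially---chooses the side length of each leaf so small that the \emph{sum} of the leaf sizes attached to a backbone square $S_{R'}$ is less than the side length of $S_{R'}$ minus $2\varepsilon$. Your approach instead fixes all sizes $s_i$ in advance by topological sort of the pairwise ``smaller-than'' relations, and then peels one leaf at a time. The problem is that a topological sort only enforces $s_R < s_{R'}$ for each oriented edge; it says nothing about $\sum_{R \text{ leaf of } R'} s_R < s_{R'}$. Concretely, if $R'$ has three leaves $R_1,R_2,R_3$ piercing its top, your sort permits $s_{R'}=1$ and $s_{R_1}=s_{R_2}=s_{R_3}=\tfrac12$, and then the third leaf cannot be reattached. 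Your sentence ``I keep an $x$-slab beside the relevant side of $R'$ free for $R$, regardless of whether the fixed size of $R$ is large or small'' is exactly where the argument fails: the slab may be free of other squares yet narrower than $s_R$. The fix is easy---choose the sizes so that leaves are small enough to pack, as the paper does---but it is not what a bare topological order gives you. (A minor point: your leaf case analysis lists corner intersection and ``pierced by $R'$'' as possibilities, but for a leaf whose $x$-interval is contained in its neighbour's, only top/bottom side-piercing and containment can occur, which is what the paper uses.)

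\medskip

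\textbf{Part 2.} Here your route and the paper's essentially coincide. The paper also observes that with only corner intersections the left-side order and right-side order agree, realizes the $x$-structure by unit intervals in that order, and then assigns the $c_i$ inductively by removing the leftmost rectangle and reinserting it between its immediate vertical neighbours in the already-placed clique. Your ``propagating along the indifference order'' is this same induction phrased globally; your appeal to claw-freeness to get a unit-interval realization is a nice alternative to the paper's direct argument. The one place your write-up is thinner than the paper's is the justification that the $y$-offsets can be chosen consistently: ``the original arrangement already exhibits a valid assignment'' is not quite enough, since in the original the $c$'s and $d$'s are independent while in unit squares $d=c+1$. What actually makes it work (and what the paper's induction implicitly uses) is that the right-neighbours of any $R$ form a clique, so their $y$-intervals are totally ordered, and preserving that total order already forces the correct crossing order on $R$'s right side.
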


On the other hand, cross-free, line-pierced rectangle arrangements in general may have forbidden cycles or other geometric obstructions to squarability. We give two examples in Sec.~\ref{sec:line-pierced}, together with  the proof  of Thm.~\ref{thm:line-pierced}. %

\section{Bijections between 4-Orientations, Corner-Edge-Labelings and Rectangle Contact Arrangements -- Proof of Theorem~\ref{thm:triangle-free}}
\label{sec:triangle-free}

Throughout this section let $G = (V,E)$ be a fixed MTP-graph and $\bar{G}$ be its closure.
By definition, every corner-edge-labeling of $\bar{G}$ induces a $4$-orientation of $\bar{G}$.
We prove Thm.~\ref{thm:triangle-free}, i.e., that combinatorial rectangle contact arrangements of $G$, $4$-orientations of $\bar{G}$ and corner-edge-labelings of $\bar{G}$ are in bijection, in three steps:

\medskip

\begin{compactitem}
 \item Every rectangle contact arrangement of $G$ induces a $4$-orientation of $\bar{G}$. (Lemma~\ref{lem:rectangles-to-orientation})

 \item Every $4$-orientation of $\bar{G}$ induces a corner-edge-labeling of $\bar{G}$. (Lemma~\ref{lem:orientation-to-labeling})

 \item Every corner-edge-labeling of $\bar{G}$ induces a rectangle contact arrangement of $G$. (Lemma~\ref{lem:labeling-to-rectangles})
\end{compactitem}

\subsection{From rectangle arrangements to $4$-orientations.}

\wormholeLem{lem:rectangles-to-orientation}
\begin{lemma}\label{lem:rectangles-to-orientation}
 Every rectangle contact arrangement of $G$ induces a $4$-orientation of $\bar{G}$.
\end{lemma}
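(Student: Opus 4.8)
The plan is to produce, from a given rectangle contact arrangement $\calR$ of $G$, an orientation of all edges and half-edges of $\bar{G}$ in which every vertex has outdegree exactly $4$. The guiding intuition comes from Fig.~\ref{fig:local-colors}: each rectangle has exactly four corners, and I want each corner of a rectangle $R$ to ``account for'' exactly one outgoing edge at the corresponding vertex $v$ of $\bar{G}$. So the first step is to set up a rule that assigns, to each corner of each rectangle, one outgoing edge or half-edge in $\bar{G}$, and then to verify that this rule uses up each edge/half-edge consistently (i.e.\ every edge gets oriented, and no edge is claimed as outgoing by both endpoints).

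**Defining the orientation locally.**

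Concretely, I would walk around the boundary of a rectangle $R$ (counterclockwise, say) and look at what happens at each of its four corners. A corner of $R$ either sits in a side contact, a corner contact, or free in a face of $G$. For each corner I orient exactly one edge or half-edge \emph{away} from $v$: if the corner lies inside an inner face $f$ of $G$, I orient the loose edge $vf$ from $v$ to $f$; if the corner lies in the outer face (for an outer vertex), I orient one of the two outer half-edges away from $v$; and if the corner participates in a contact with another rectangle $R'$, I orient one edge of the edge pair $vv'$ from $v$ toward $v'$, choosing which of the two parallel edges according to the local geometry (which side of $R$ the contact lies on). The key is that a side contact and a corner contact each involve a definite pair of rectangle sides, so the geometry tells me unambiguously which parallel edge of the pair to use and in which direction. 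This gives outdegree exactly $4$ at every vertex $v$ of $G$, since $R$ has exactly four corners.

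**Checking the face-vertices and consistency.**

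It remains to check the vertices of $\bar{G}$ that correspond to inner faces $f$ of $G$, and to confirm that the orientation is well-defined on the whole edge set. For a face-vertex $f$, the incident edges in $\bar{G}$ are exactly the loose edges to the vertices on the boundary of $f$; since each inner face of $G$ is a $4$-cycle or $5$-cycle, $f$ has either $4$ or $5$ incident loose edges. I must argue that exactly four of them are oriented \emph{out} of $f$, equivalently that the number oriented \emph{into} $f$ (from the corner rule above) is $\deg(f)-4$, i.e.\ $0$ for a $4$-face and $1$ for a $5$-face. Geometrically this says that in a $4$-face exactly all four incident rectangles place a corner into $f$ (so all four loose edges point into $f$ and none out, and they are balanced by $f$ sending out its own four), whereas in a $5$-face the ``hole'' in the middle forces exactly one of the five rectangles to face $f$ with a side rather than a corner. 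This local case analysis — matching each inner face type of an MTP-graph against the admissible local pictures of Fig.~\ref{fig:local-colors} and confirming the outdegree-$4$ count at face-vertices — is the main obstacle, since it requires enumerating how $4$- and $5$-faces can be filled by rectangle corners and sides. Once the count at each face-vertex is verified, and once I confirm that each edge pair is oriented by exactly the geometry at its two endpoints (no edge left unoriented, no half-edge doubly claimed), every vertex of $\bar{G}$ has outdegree $4$ and the induced orientation is a $4$-orientation, completing the proof.
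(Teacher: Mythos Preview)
Your approach is the paper's: assign to each of the four corners of $R(v)$ one outgoing edge at $v$ (in an edge pair if the corner lies on another rectangle, a loose edge or outer half-edge otherwise), then verify outdegree~$4$ at every face-vertex~$f$. You also state the right target, namely that the number of loose edges directed into~$f$ must be $\deg(f)-4$.

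However, the geometric justification you give for this target is inverted. In a $4$-face the hole $P(f)$ is a rectangle whose four corners are each an endpoint of the contact segment between two consecutive boundary rectangles; these points are therefore handled by your edge-pair rule, and \emph{no} rectangle has a corner lying free in the face region. Hence all four loose edges are outgoing at~$f$, not incoming as you write --- and your parenthetical ``balanced by $f$ sending out its own four'' cannot be right, since $f$ has only these four incident edges in total. Similarly, in a $5$-face the hole is an orthogonal hexagon whose five convex corners are again contact-endpoints, while its single \emph{concave} corner is the unique rectangle corner lying free in the face; so exactly one loose edge is incoming at~$f$ and four are outgoing --- the reverse of what your description (``one of the five rectangles to face $f$ with a side rather than a corner'') suggests. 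The paper makes this explicit by letting $f$ send its four outgoing loose edges to the rectangles that form the four extremal (top, bottom, left, right) sides of~$P(f)$. Once you flip this picture, your argument is exactly the paper's.
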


\begin{proof}
 Consider any rectangle contact arrangement $\{R(v) \mid v \in V\}$ of $G$, we define a $4$-orientation of $\bar{G}$ as follows.
 Consider any vertex $v$ in $G$, the rectangle $R(v)$ and the set $C(v)$ of points on the boundary of $R(v)$ that are corners of some rectangle in the arrangement.
 We draw vertex $v$ inside $R(v)$ and draw a straight segment to each point in $C(v)$.
 See Fig.~\ref{fig:rectangles-to-orientation}.
 Such a segment is supposed to be half an edge in $\bar{G}$ incident to $v$.
 We orient a segment outgoing at $v$ if the corresponding corner in $C(v)$ is a corner of $R(v)$, and incoming at $v$ otherwise.

 \begin{figure}[tb]
  \centering
  \includegraphics{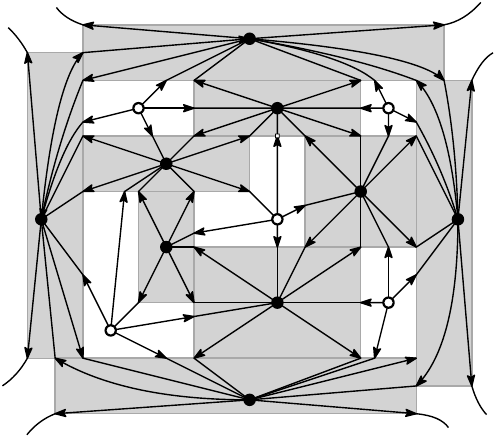}
  \caption{Obtaining a $4$-orientation of $\bar{G}$ from a rectangle contact arrangement of $G$.}
  \label{fig:rectangles-to-orientation}
 \end{figure}

 Next, consider any inner face $f$ in $G$ and the corresponding bounded component of $\mathbb{R}^2 - \bigcup_{v \in V} R(v)$, which is an axis-aligned polygon denoted by $P(f)$.
 In fact, $P(f)$ is a $4$-gon or $6$-gon when $f$ is a $4$-face or $5$-face, respectively.
 We draw the vertex $f$ in $\bar{G}$ inside $P(f)$ and draw outgoing edges from $f$ to the four vertices of $G$ whose rectangles constitute the four extremal (topmost, rightmost, bottommost and leftmost) sides of $P(f)$.
 If $P(f)$ is a $6$-gon we additionally draw a half-edge from its concave corner to $f$, oriented incoming at $f$.
 See again Fig.~\ref{fig:rectangles-to-orientation} for an illustration.
\end{proof}

We continue with a crucial property of $4$-orientations.
For a simple cycle $C$ of $G$, consider the corresponding cycle $\bar{C}$ of edge pairs in $\bar{G}$.
The \emph{interior} of $\bar{C}$ is the bounded component of $\mathbb{R}^2$ incident to all vertices in $C$ after the removal of all vertices and edges of $\bar{C}$.
In a fixed $4$-orientation of $\bar{G}$ a directed edge $e=(u,v)$ \emph{points inside} $C$ if $u \in V(C)$ and $e$ lies in the interior of $\bar{C}$, i.e., either $v$ lies in the interior of $C$, or $e$ is a chord of $\bar{C}$ in the interior of $\bar{C}$.

\wormholeLem{lem:pendant-at-cycle}
\begin{lemma}\label{lem:pendant-at-cycle}
 For every $4$-orientation of $\bar{G}$ and every simple cycle $C$ of $G$ the number of edges pointing inside $C$ is exactly $|V(C)|-4$.
\end{lemma}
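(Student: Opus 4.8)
The plan is to establish the identity by double counting inside the closed region $\bar D$ bounded by $\bar C$. Write $k = |V(C)|$, let $p$ be the number of vertices of $G$ lying strictly inside $C$, and let $q$ be the number of inner faces of $G$ inside $C$; each such face contributes one face-vertex of $\bar G$ in the interior of $\bar C$, so there are exactly $p+q$ vertices of $\bar G$ strictly inside. Call an edge of $\bar G$ an \emph{interior edge} if it lies in the interior of $\bar C$, i.e.\ strictly inside $\bar D$ and off the boundary cycle $\bar C$. Since $\bar C$ separates the plane, every edge of $\bar G$ incident to a vertex strictly inside $\bar C$ is an interior edge; in particular all four outgoing edges at each of the $p+q$ interior vertices (face-vertices included, as they too have outdegree $4$) are interior edges, so exactly $4(p+q)$ interior edges have their tail at an interior vertex. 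The only other interior edges are those with tail on $C$, and by definition these are exactly the edges pointing inside $C$. Hence, if $X$ denotes their number,
\[
 X = \bigl(\text{number of interior edges of } \bar G\bigr) - 4(p+q),
\]
and the lemma reduces to counting interior edges.

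I would count the interior edges straight from the closure. They come in two kinds: the doubled edges arising from edges of $G$ drawn strictly inside $C$, and the loose edges emanating from the interior face-vertices. With $m_{\mathrm{in}}$ the number of edges of $G$ strictly inside $C$ and $T = \sum_f \ell(f)$ the sum of the lengths of the inner faces $f$ of $G$ inside $C$, the first kind contributes $2m_{\mathrm{in}}$ edges and the second contributes $T$ (a face of length $\ell$ sends $\ell$ loose edges to its face-vertex). The point requiring care is that the $k$ doubled edges along the boundary $\bar C$ must be excluded: one representative of each such pair bounds $\bar D$ and the enclosed $2$-gon is pinched off to the outside, so neither lies in the interior. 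Thus the number of interior edges equals $2m_{\mathrm{in}} + T$.

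It remains to evaluate $m_{\mathrm{in}}$ and $T$ by Euler's formula on the (connected) plane subgraph of $G$ formed by $C$ together with everything drawn inside it: it has $k+p$ vertices, $k+m_{\mathrm{in}}$ edges and $q+1$ faces, whence $m_{\mathrm{in}} = p+q-1$. Summing face lengths over this subgraph counts every edge twice, and the outer face (bounded by the simple cycle $C$) has length $k$, so $T = 2(k+m_{\mathrm{in}}) - k = k + 2m_{\mathrm{in}}$. Substituting both into the displayed identity gives
\[
 X = 2m_{\mathrm{in}} + T - 4(p+q) = k + 4m_{\mathrm{in}} - 4(p+q) = k + 4(p+q-1) - 4(p+q) = k-4,
\]
as required. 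I expect the genuine obstacle to be only the bookkeeping of doubled and loose edges—pinning down exactly which boundary edges and $2$-gons fall inside $\bar D$—since the outdegree-$4$ condition and Euler's formula otherwise do all the work; the result is robust because both the total interior-edge count and the interior outdegree increase by exactly $4$ for each interior vertex, so the net constant $-4$ is dictated solely by the outer boundary of the region.
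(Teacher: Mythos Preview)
Your proof is correct and takes a genuinely different route from the paper's. The paper proceeds by induction on the number of inner faces of $G$ enclosed by $C$: the base case (a single face $f$) follows immediately from $\outdeg(f)=4$, and the inductive step splits $C$ by an interior path $P$ into two smaller cycles $C_1,C_2$, applies the hypothesis to each, and carefully accounts for the outgoing edges along $P$ to recombine the two counts into $|E(C)|-4$. Your argument replaces this induction by a single global double-count: the total number of interior edges of $\bar G$ minus the $4(p+q)$ out-edges based at interior vertices, with Euler's formula on the plane subgraph bounded by $C$ evaluating the difference directly. Your approach is arguably cleaner---no induction, and it makes transparent that the constant $-4$ is precisely the Euler-characteristic correction---while the paper's inductive argument avoids any explicit appeal to Euler's formula and stays closer to the raw combinatorics of the orientation.

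One small point of presentation: your description of the boundary $2$-gons being ``pinched off to the outside'' is slightly loose. The paper defines $\bar C$ to consist of \emph{both} parallel copies of each boundary edge and takes the interior of $\bar C$ to be the unique bounded component incident to all vertices of $C$ after removing $\bar C$; under that definition the $k$ enclosed $2$-gon regions are simply separate bounded components (each incident to only two vertices of $C$), not part of the interior. Either way your conclusion---that neither parallel copy along $C$ counts as an interior edge---is correct, so the bookkeeping goes through as you wrote it.
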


\begin{proof}
 We prove the statement by induction on the number of inner faces in the interior of $C$.
 If $C$ is the boundary of a single inner face $f$, then exactly $|V(C)|-4$ edges are incoming at $f$ (as $\outdeg(f) = 4$), which are exactly the edges pointing inside $C$.
 Otherwise, there is a path $P$ in the interior of $C$ that shares only its endpoints with $C$.
 Path $P$ splits $C$ into two simple cycles $C_1$ and $C_2$ having exactly the edges of $P$ in common, each of which has fewer inner faces of $G$ in its interior.

 By induction hypothesis $C_i$ has exactly $|E(C_i)|-4$ edges pointing inside, for $i=1,2$.
 Let $a$, $a \in \{0,\ldots,4\}$, be the number of edges in $\bar{G}$ that lie in $P$ and are outgoing at an endpoint of $P$.
 The remaining $2|E(P)|-a$ edges in $P$ are outgoing at some vertex in $P \setminus C$.
 As there are exactly $|E(P)|-1$ such vertices, each with out-degree $4$, we conclude that exactly $4(|E(P)|-1)-(2|E(P)|-a) = 2|E(P)|-4+a$ edges that point from a vertex of $P \setminus C$ inside $C_1$ or $C_2$.
 Plugging things together, the number of edges pointing inside $C$ is exactly
 \begin{align*}
 |E(C_1)|-4 + |E(C_2)|-4 + a - (2|E(P)|-4+a) &= (|E(C_1)|+|E(C_2)|-2|E(P)|) - 4\\
  &= |E(C)| - 4,
 \end{align*}
 as desired.
\end{proof}

\subsection{From $4$-orientations to corner-edge-labelings.}

Next we shall show how a $4$-orientation of $\bar{G}$ can be augmented (by choosing colors for the edges) into a corner-edge-labeling.
Fix a $4$-orientation.
If $e$ is a directed edge in an edge pair, then $e$ is called a \emph{left edge}, respectively \emph{right edge}, when the $2$-gon enclosed by the edge pair lies on the right, respectively on the left, when going along $e$ in its direction.
Thus, a uni-directed edge pair consists of one left edge and one right edge, while a bi-directed edge pair either consists of two left edges (clockwise oriented $2$-gon) or two right edges (counterclockwise oriented $2$-gon).

If $e = (u,v)$ is an edge in an edge pair, let $e_2$ and $e_3$ be the second and third outgoing edge at $v$ when going counterclockwise around $v$ starting with $e$.
We define the \emph{successor} of $e$ as $\suc(e) = e_2$ if $e$ is a right edge, and $\suc(e) = e_3$ if $e$ is a left edge, see Fig.~\ref{fig:successor}~(b,c).
Note that in a corner-edge-labeling $\suc(e)$ is exactly the outgoing edge at $v$ that has the same color as $e$, see Fig.~\ref{fig:local-colors}.

\begin{figure}[tb]
 \centering
 \includegraphics[width=\textwidth]{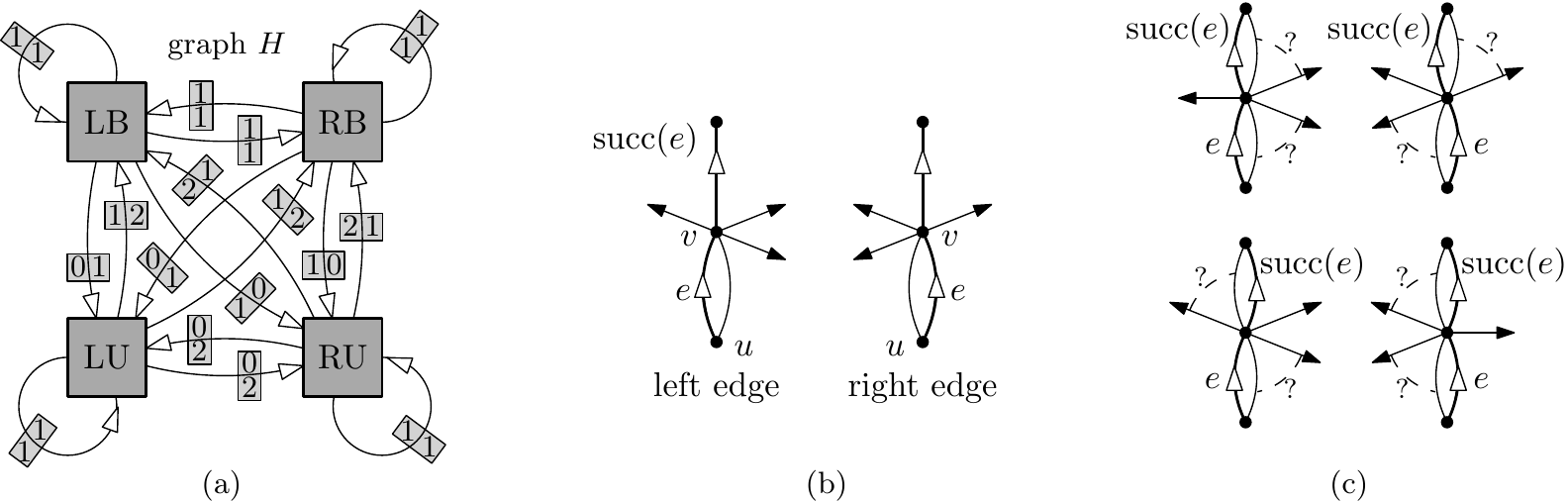}
 \caption{(a) The graph $H$. \textbf{L}, \textbf{R}, \textbf{U}, \textbf{B} stands for left edge, right edge, uni-directed and bi-directed edge pair, respectively. The number of outgoing edges in the left and right wedge are shown on the left and right of the corresponding arrow.
 (b) Illustration of the definition of $\suc(e)$.
 (c) Summarizing the $16$ possible cases for $e$ and $\suc(e)$. Edges connected by a dashed arc may or may not coincide.}
 \label{fig:successor}
\end{figure}

Note that $e' = \suc(e)$ may be a loose edge in $\bar{G}$ at the concave vertex for some $5$-face in $G$.
For the sake of shorter proofs below, we shall avoid the treatment of this case.
To do so, we augment $G$ to a supergraph $G'$ such that starting with any edge in any edge pair and repeatedly taking the successor, we never run into a loose edge pointing to an inner face.

The graph $G'$ is formally obtained from $G$ by stacking a new vertex $w$ into each $5$-face $f$, with an edge to the incoming neighbor $v$ of $f$ in $\bar{G}$ and the vertex $u$ at $f$ that comes second after $v$ in the clockwise order around $f$ in $\bar{G}$.
(Indeed, the second vertex in counterclockwise order would be equally good for our purposes.)
Let $f_1$ and $f_2$ be the resulting $4$-face and $5$-face incident to $w$, respectively.
We obtain a $4$-orientation of the closure $\bar{G'}$ of $G'$ by orienting all edges at $f_1$ as outgoing, both edges between $v$ and $w$ as right edges (counterclockwise), the remaining three edges at $w$ as outgoing, and the remaining four edges at $f_2$ as outgoing.
See Fig.~\ref{fig:force-last-edges}~(left) for an illustration.

\begin{figure}[tb]
 \centering
 \includegraphics[scale=0.8]{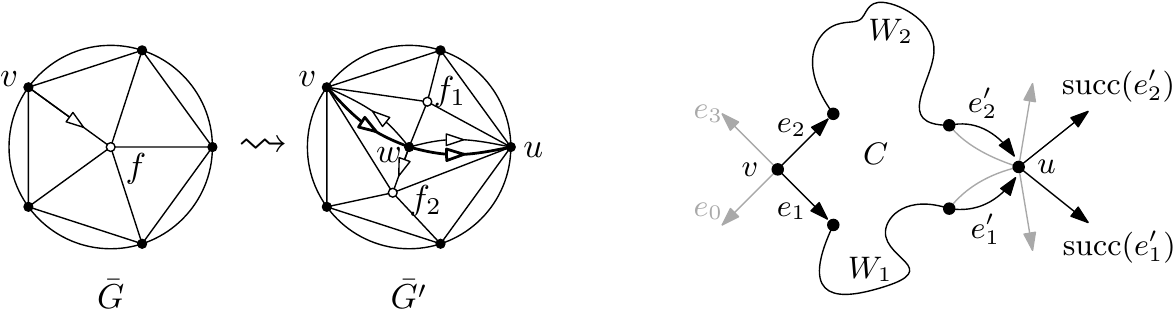}
 \caption{Left: Stacking a new vertex $w$ into a $5$-face $f$ of $G$. The orientation of edges on the boundary of $f$, as well as outgoing edges at $f$, $f_1$, $f_2$ is omitted. The directed edge $(v,w)$ and its successor $(w,u)$ are highlighted. Right: Illustration of the proof of the Claim in the proof of Lemma~\ref{lem:orientation-to-labeling}.}
 \label{fig:force-last-edges}
\end{figure}

Before we augment the $4$-orientation of $\bar{G'}$ into a corner-edge-labeling, we need one last observation.
Let $e$ and $\suc(e)$ be two edges in edge pairs of $\bar{G'}$ with common vertex $v$.
Consider the wedges at $v$ between $e$ and $\suc(e)$ when going clockwise (left wedge) and counterclockwise (right wedge) around $v$.
Each of $e$, $\suc(e)$ can be a left edge or right edge, and in a uni-directed pair or a bi-directed pair.
This gives us four types of edges and $16$ possibilities for the types of $e$ and $\suc(e)$.
The graph $H$ in Fig.~\ref{fig:successor}~(a) shows for each of these $16$ possibilities the number of outgoing edges at $v$ in the left and right wedge at $v$.

\begin{observation}\label{obs:count-left-right}
 For every directed closed walk on $k$ edges in the graph $H$ in Fig.~\ref{fig:successor}~(a) we have
 \[
  \#\text{edges in left wedges} = \#\text{edges in right wedges} = k.
 \]
\end{observation}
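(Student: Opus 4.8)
The plan is to read Observation~\ref{obs:count-left-right} as a purely combinatorial statement about the finite labeled digraph $H$. Write $l(a)$ and $r(a)$ for the left- and right-wedge counts attached to an edge $a$ of $H$, and for a closed walk $W$ of length $k$ let the two target quantities be $\sum_{a\in W} l(a)$ and $\sum_{a\in W} r(a)$. Since $\sum_{a\in W} 1 = k$, it suffices to produce vertex potentials $\phi_l,\phi_r\colon V(H)\to\mathbb{Z}$ such that, on every edge $a=A\to B$ of $H$,
\[
 l(a)-1 = \phi_l(A)-\phi_l(B) \qquad\text{and}\qquad r(a)-1 = \phi_r(A)-\phi_r(B).
\]
Indeed, along a closed walk the head of each edge is the tail of the next, so both right-hand sides telescope to $0$ and we get $\sum_{a\in W} l(a) = \sum_{a\in W} r(a) = k$. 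As $H$ has only the four vertices L-U, R-U, L-B, R-B, exhibiting and checking such potentials is a finite task.

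First I would fix the bookkeeping once and for all. The edge $e$ is incoming at the common vertex $v$, which has out-degree exactly $4$; $\suc(e)$ is the second outgoing edge (if $e$ is a right edge) or the third outgoing edge (if $e$ is a left edge) encountered while turning counterclockwise from $e$; and the right and left wedges are the counterclockwise and clockwise angular sectors from $e$ to $\suc(e)$, inside which I count the outgoing edges. With this convention in place I would reconstruct the label $(l(a),r(a))$ of each edge $a=A\to B$ from the type of $e$ (this determines whether $\suc(e)$ is the second or third outgoing edge, hence how the four outgoing edges split between the two wedges) together with the type $B$ of $\suc(e)$. The only subtle geometric input is whether the pair-partner of $e$ is itself outgoing at $v$: this happens precisely when $e$ lies in a bi-directed pair, and it drops one extra outgoing edge into the wedge on the $2$-gon side of $e$. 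This is exactly the data recorded in Fig.~\ref{fig:successor}(a), so in practice this step amounts to reading off $H$ and confirming each label against the convention.

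With the table of labels in hand I would write down the potentials. I expect them to be simple functions of the two binary attributes of a type, left/right edge and uni-/bi-directed: the successor definition advances by two outgoing edges for a right edge and by three for a left edge, contributing a constant difference that a potential distinguishing left from right edges can absorb, while the partner-edge contribution $\pm1$ is accounted for by a term distinguishing uni- from bi-directed pairs. I would then verify the two displayed identities on every edge of $H$ and invoke the telescoping argument above to finish.

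The main obstacle is the middle step, namely getting every per-edge count exactly right; in particular the convention must be nailed down for the borderline edges, i.e.\ whether $\suc(e)$ itself, the partner of $e$, or the partner of $\suc(e)$ is counted inside a wedge or treated as its boundary. A single off-by-one there would destroy the existence of $\phi_l$ and $\phi_r$. Once the labels are correct the potentials are essentially forced and nothing substantive remains. I would also watch for the simplifying possibility that $l(a)+r(a)$ takes the same value on every edge of $H$: if that constant is $2$, then $l(a)-1 = -(r(a)-1)$, a single potential with $\phi_l=-\phi_r$ works, and the left-wedge identity follows for free from the right-wedge one.
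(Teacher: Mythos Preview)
Your approach is correct. The paper's own proof is the one-line remark that it suffices to check every directed cycle in $H$ of length $1,2,3,4$, relying implicitly on the fact that the edge-multiset of any closed walk decomposes into simple directed cycles and that a $4$-vertex digraph has no longer simple cycles. Your potential-function argument is logically equivalent---potentials $\phi_l,\phi_r$ with $l(a)-1=\phi_l(A)-\phi_l(B)$ and $r(a)-1=\phi_r(A)-\phi_r(B)$ exist precisely when $l-1$ and $r-1$ sum to zero around every cycle---but it reorganizes the verification: instead of enumerating and checking each cycle, you fix four potential values and check one identity per edge of $H$. That is arguably tidier and less error-prone, and your observation that $l(a)+r(a)$ might be constant (so that $\phi_r=-\phi_l$ would do double duty) is a nice shortcut to look for. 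The paper, by contrast, treats the whole thing as a read-off-the-figure exercise and does not engage with the edge-by-edge bookkeeping you rightly flag as the delicate part.
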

\begin{proof}
 It suffices to check each directed cycle on $k$ edges, $k=1,2,3,4$.
\end{proof}

\begin{lemma}\label{lem:orientation-to-labeling}
 Every $4$-orientation of $\bar{G}$ induces a corner-edge-labeling of $\bar{G}$.
\end{lemma}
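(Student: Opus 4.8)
The plan is to read the coloring off the successor map. In any corner-edge-labeling $\suc(e)$ receives the same color as $e$, so I will declare the four color classes to be the forward orbits of $\suc$ and then check that rules~(i) and~(ii) are forced by the way $\suc$ was defined. Throughout I work in the augmented closure $\bar{G'}$, where the stacking construction guarantees that the successor of an edge-pair edge is again an edge-pair edge or a loose edge into the outer face, but never a loose edge into an inner face; at the very end I delete the stacked vertices $w$ together with their incident edges and argue that the colors of the surviving edges of $\bar{G}$ are unchanged, so that a corner-edge-labeling of $\bar{G'}$ restricts to one of $\bar{G}$.

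Concretely I would first fix the colors of the loose edges entering the outer face from the prescribed configuration of the four outer rectangles (two corners of each lie in the unbounded region), and then color every remaining outgoing edge $e$ by following $\suc$ forward from $e$ and copying the color of the outer loose edge at which this walk terminates. For this rule to be well defined I must show that every forward $\suc$-walk actually terminates at an outer loose edge, i.e. that $\suc$ has no directed cycle inside $G'$. This is the Claim, which I expect to prove as illustrated in Fig.~\ref{fig:force-last-edges}~(right): a closed $\suc$-walk would trace a simple cycle $C$ of $G$ all of whose relevant edges point inside $C$, and counting these against the value $|V(C)|-4$ of Lemma~\ref{lem:pendant-at-cycle} yields a contradiction. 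Since $\suc$ is total on the edge-pair edges of $\bar{G'}$ and terminal only at outer loose edges, acyclicity makes the color of every edge well defined and unique.

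It then remains to verify the two labeling axioms. Rule~(ii) is a finite case analysis, the sixteen cases of Fig.~\ref{fig:successor}~(c): because $\suc(e)=e_2$ when $e$ is a right edge and $\suc(e)=e_3$ when $e$ is a left edge, the fact that $\suc(e)$ is the color-$c$ outgoing edge at $v$ forces $e$ to lie in exactly the incoming wedge between the outgoing edges of colors $c-2,c-1$ or of $c+1,c+2$, which is precisely the requirement that an incoming edge in the wedge after color $i$ be colored $i+2$ or $i+3$ (cf. Fig.~\ref{fig:local-colors}). Rule~(i) — that the four outgoing edges at each vertex carry the four distinct colors in counterclockwise order $0,1,2,3$ — is where Observation~\ref{obs:count-left-right} is decisive: the equality of the numbers of outgoing edges in the left and right wedges along every closed walk in the graph $H$ shows that the color index returns to itself with the correct net rotation, so the four colors meeting at each vertex are pairwise distinct and occur in the prescribed cyclic order.

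The main obstacle is exactly this last point, the global rotational consistency of the colors: the successor map only equates colors across single edges, and one must rule out a ``frustrated'' closed walk along which the accumulated color shift is nonzero modulo $4$. I therefore expect the bulk of the effort to lie in the Claim and in converting Observation~\ref{obs:count-left-right} into the statement that every vertex sees all four colors exactly once. Given these two facts, restricting the labeling from $\bar{G'}$ back to $\bar{G}$ and coloring the inner loose edges at concave corners is routine.
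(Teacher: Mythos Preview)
Your overall architecture matches the paper's: pass to $\bar{G'}$, define the color of each edge-pair edge by the outer loose edge its $\suc$-walk reaches, prove acyclicity via Lemma~\ref{lem:pendant-at-cycle}, then restrict back to $\bar{G}$. The acyclicity argument is also essentially right, though the contradiction is not that ``all relevant edges point inside $C$'' but that the number pointing inside lies in a short range around $|V(C)|$ (from the wedge labels in $H$ together with Observation~\ref{obs:count-left-right}), and this range excludes $|V(C)|-4$.

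The real gap is rule~(i). Your proposed mechanism --- that Observation~\ref{obs:count-left-right} on closed walks in $H$ yields ``rotational consistency'' --- does not do the job as stated: once acyclicity is established there \emph{are} no closed $\suc$-walks in $\bar{G'}$, so a statement about closed walks in $H$ has nothing to bite on. What must actually be ruled out is that two of the four $\suc$-walks leaving a vertex $v$ reach outer loose edges of the same color, and nothing in your sketch prevents two such walks from merging or crossing. The paper handles this with a separate Claim (which is what Fig.~\ref{fig:force-last-edges}~(right) in fact illustrates, not acyclicity): if $W_{e_1}$ and $W_{e_2}$ share a vertex $u\neq v$, one forms the simple cycle $C$ from the two subpaths $W_1,W_2$ between $v$ and $u$ and applies Lemma~\ref{lem:pendant-at-cycle} and Observation~\ref{obs:count-left-right} to $C$. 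The resulting count forces zero edges to point inside $C$ at both $u$ and $v$, and reading this off in $H$ pins down the types of the first and last edges of $W_1,W_2$: $e_1,e_2$ must be \emph{consecutive} outgoing edges at $v$, and the last edges $e'_1,e'_2$ land in the \emph{same} incoming wedge at $u$ as a right and a left edge of uni-directed pairs. Thus the four walks from $v$ can neither cross nor share an edge, and together with the fixed colors at the four outer vertices this is what delivers the distinct colors $0,1,2,3$ in counterclockwise order. Your rule~(ii) verification is fine but tacitly presupposes rule~(i), so it should come afterward.
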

\begin{proof}
 Consider the augmented graph $G'$, its closure $\bar{G'}$ and $4$-orientation as defined above.
 For any edge $e$ in an edge pair in $\bar{G'}$ (and hence every edge of $\bar{G}$ outgoing at some vertex of $G$) consider the directed walk $W_e$ in $\bar{G'}$ starting with $e$ by repeatedly taking the successor as long as it exists (namely the current edge is in an edge pair).
 
 First we show that $W_e$ is a simple path ending at one of the eight loose edges in the outer face.
 Indeed, otherwise $W_e$ would contain a simple cycle $C$ where every edge on $C$, except the first, is the successor of its preceding edge on $C$.
 From the graph $H$ of Fig.~\ref{fig:successor}~(a) we see that every wedge of $C$ contains at most two outgoing edges.
 With Obs.~\ref{obs:count-left-right} the number of edges pointing inside $C$ is at least $|V(C)|-2$ and at most $|V(C)|+2$, which is a contradiction to Lemma~\ref{lem:pendant-at-cycle}.
 
 Now let $v_0,v_1,v_2,v_3$ be the outer vertices in this counterclockwise order.
 Define the color of $e$ to be $i$ if $W_e$ ends with the right loose edge at $v_i$ or the left loose edge at $v_{i-1}$, indices modulo $4$.
 By definition every edge has the same color as its successor in $\bar{G'}$ (if it exists).
 Thus this coloring is a corner-edge-labeling of $\bar{G'}$ if at every vertex $v$ of $G$ the four outgoing edges are colored $0$, $1$, $2$, and $3$, in this counterclockwise order around $v$.
 
 \begin{claim}%
  Let $e_1,e_2$ be two outgoing edges at $v$ for which $W_{e_1} \cap W_{e_2}$ consists of more than just $v$.
  Then $e_1$ and $e_2$ appear consecutively among the outgoing edges around $v$, say $e_1$ clockwise after $e_2$.

  Moreover, if $u \neq v$ is a vertex in $W_{e_1}\cap W_{e_2}$ for which the subpaths $W_1$ of $W_{e_1}$ and $W_2$ of $W_{e_2}$ between $v$ and $u$ do not share inner vertices, then the last edge $e'_1$ of $W_1$ is a right edge and the last edge $e'_2$ of $W_2$ is a left edge, $e'_1$ and $e'_2$ are part of (possibly the same) uni-directed pairs and these pairs sit in the same incoming wedge at $u$.
 \end{claim}

 \begin{claimproof}
  If $W_{e_1} \cap W_{e_2}$ consists of more than just $v$, then clearly there is such a vertex $u$ for which the subpaths $W_1$ of $W_{e_1}$ and $W_2$ of $W_{e_2}$ between $v$ and $u$ do not share inner vertices.
  Consider the simple cycle $C$ in $\bar{G'}$ formed by $W_1$ and $W_2$.
  Assume without loss of generality that going along $W_1$ the interior of $C$ lies on the left, see Fig.~\ref{fig:force-last-edges}~(right).
  Note that if $u$ is incoming endpoint of $e_1$ and $e_2$, then the statement clearly holds.
  Thus we may assume that $|V(C)| > 2$.

  If $k_1 = |E(W_1)|$ and $k_2 = |E(W_2)|$, then by Obs.~\ref{obs:count-left-right} the number of edges pointing inside $C$ from inner vertices of $W_1$ and $W_2$ is at least $(k_1-2) + (k_2-2)$ and at most $(k_1 + 2) + (k_2+2)$.
  By Lemma~\ref{lem:pendant-at-cycle} the total number of edges pointing inside $C$ is $|V(C)|-4 = k_1+k_2-4$.
  It follows there are exactly $k_1-2$ such edges at inner vertices of $W_1$, exactly $k_2 -2$ such edges at inner vertices of $W_2$ and no such edge at $u$ or $v$.
  From the graph $H$ in Fig.~\ref{fig:successor}~(a)
  we can deduce that $e'_1$ is the right edge in a uni-directed pair, as desired, because \textbf{RU} is the only vertex of $H$ with outgoing arrows labeled $2$ on the left.
  Moreover, $e_1$ can not be the right edge in a uni-directed pair, too, because the edge in $H$ from \textbf{RU} to \textbf{RU} is labeled only $1$ on the left.
  Similarly, $e'_2$ must be the right edge in a uni-directed pair as desired and $e_2$ not the right edge in a uni-directed pair.
  This finally implies that $e_1$ and $e_2$ are consecutive in the counterclockwise order around $v$ and hence concludes to proof of the claim.
 \end{claimproof}

 The claim implies that the two walks $W_{e_1}$ and $W_{e_2}$ can neither cross, nor have an edge in common.
 Considering the four walks starting in a given vertex, we can argue (with the second part of the claim) that our coloring is a corner-edge-labeling of $\bar{G'}$.
 Finally, we inherit a corner-edge-labeling of $\bar{G}$ by reverting the stacking of artificial vertices in $5$-faces.
\end{proof}

\subsection{From corner-edge-labelings to rectangle contact arrangements.}

We shall show how a rectangle arrangement of $G$ can be constructed from a corner-edge-labeling of $\bar{G}$.
Fix a corner-edge-labeling of $\bar{G}$.
We shall construct a rectangle contact arrangement $\{R(v) \mid v \in V\}$ of $G$ with $R(v) = [x_1(v),x_2(v)] \times [y_1(v),y_2(v)]$, which is compatible with the given corner-edge-labeling.
For every vertex $v$ of $G$ we have two pairs of variables $x_1(v),x_2(v)$ and $y_1(v),y_2(v)$ corresponding to the $x$-coordinates of the vertical sides of $R(v)$ ($x_1$ for left and $x_2$ for right) and the $y$-coordinates of the $y$-coordinates of the horizontal sides of $R(v)$ ($y_1$ for lower and $y_2$ for upper).
Now $R(v)$ is a rectangle if and only if
\begin{equation}
 x_1(v) < x_2(v) \qquad \text{ and } \qquad y_1(v) < y_2(v). \label{eq:proper-rectangle}
\end{equation}

For every edge $vw$ of $G$ the way in which $R(v)$ and $R(w)$ are supposed to touch (which two corners lie in $R(v) \cap R(w)$) is encoded in the given corner-edge-labeling.
This can be described by more equations and inequalities in terms of variables $x_1(v)$, $x_2(v)$, $x_1(w)$, $x_2(w)$, $y_1(v)$, $y_2(v)$, $y_1(w)$, $y_2(w)$, which depend only on the coloring and orientation of the edge pair between $v$ and $w$ in $\bar{G}$.
For example, if the edge pair is uni-directed, outgoing at $v$, incoming at $w$ and colored with $0$ and $1$, then the top side of $R(v)$ is supposed to be contained in the bottom side of $R(w)$, which is the case if and only if $y_2(v) = y_1(w)$ and $x_1(w) < x_1(v) < x_2(v) < x_2(w)$.
The complete system of inequalities and equalities is given in Table~\ref{tab:constraints}, where we list the constraint and the conditions (color and orientation) of a single directed edge between $v$ and $w$ or a uni-directed edge pair outgoing at $v$ and incoming at $w$ in $\bar{G}$ under which we have this constraint.

\begin{table}[tb]
 \centering
 \begin{minipage}{0.48\textwidth}
 \begin{tabular}{|c|c|c|c|}
  \hline
  constraint & \ edge \ & \ color \ & \ out \ \\
  \hline
  \multirow{2}{*}{\ $x_1(w) < x_1(v) < x_2(w)$\ }
   & right & $2$ & $v$ \\
   & left & $1$ & $v$ \\
  \hline
  \multirow{2}{*}{$x_1(w) < x_2(v) < x_2(w)$}
  & right & $0$ & $v$ \\
  & left & $3$ & $v$ \\
  \hline
  \multirow{3}{*}{$x_1(w) = x_2(v)$}
  & right & $1$ & $w$ \\
  & left & $2$ & $w$ \\
  & uni & $0$, $3$ & $v$ \\
  \hline
 \end{tabular}
 \end{minipage}
 \hfill
 \begin{minipage}{0.48\textwidth}
 \begin{tabular}{|c|c|c|c|}
  \hline
  constraint & \ edge \ & \ color \  & \ out \ \\
  \hline
  \multirow{2}{*}{\ $y_1(w) < y_1(v) < y_2(w)$\ }
   & right & $3$ & $v$ \\
   & left & $2$ & $v$ \\
  \hline
  \multirow{2}{*}{$y_1(w) < y_2(v) < y_2(w)$}
  & right & $1$ & $v$ \\
  & left & $0$ & $v$ \\
  \hline
  \multirow{3}{*}{$y_1(w) = y_2(v)$}
  & right & $2$ & $w$ \\
  & left & $3$ & $w$ \\
  & uni & $1$, $0$ & $v$ \\
  \hline
 \end{tabular}
 \end{minipage}
 \caption{Constraints encoding the type of contact between $R(v)$ and $R(w)$, defined based on the orientation and color(s) of the edge pair between $v$ and $w$ in $\bar{G}$.
}
 \label{tab:constraints}
\end{table}

We shall show that the system consisting of constraints~\eqref{eq:proper-rectangle} for every vertex $v$ and all constraints in Table~\ref{tab:constraints} is feasible, that is, has a solution.
However, many constraints in this system are implied by other constraints, e.g., the first rows in Table~\ref{tab:constraints} imply~\eqref{eq:proper-rectangle}.
Instead we shall define another set of constraints implying all constraints in Table~\ref{tab:constraints}, for which it is easier to prove feasibility.

First note that no constraint involves both, variables for $x$- and $y$-coordinates.
Hence we can define two independent systems $\mathcal{I}_x$ and $\mathcal{I}_y$ for $x$- and $y$-coordinates, respectively.
In the following we handle only $x$-coordinates.
An analogous argumentation holds for the $y$-coordinates.

In $\mathcal{I}_x$ we have all inequalities of the form~\eqref{eq:proper-rectangle} and all equalities in the left of Table~\ref{tab:constraints}, but only those inequalities in the left of Table~\ref{tab:constraints} that arise from edges in bi-directed edge pairs.
The inequalities in Table~\ref{tab:constraints} arising from uni-directed edge pairs are implied by the following set of inequalities.
For a vertex $v$ in $G$ let $S_1(v) = a_1,\ldots,a_k$ and $S_2(v) = b_1,\ldots,b_\ell$ be the counterclockwise sequences of neighbors of $v$ in the incoming wedges at $v$ bounded by its outgoing edges of color $0$ and $1$, and color $2$ and $3$, respectively.
See the left of Fig.~\ref{fig:define-Ix}.
Then we have in $\mathcal{I}_x$ the inequalities
\begin{equation}
 x_1(a_i) > x_2(a_{i+1}) \text{ for } i = 1,\ldots,k-1 \text{ \ and \ } x_2(b_i) < x_1(b_{i+1}) \text{ for } i=1,\ldots,\ell-1.
\end{equation}
If $k=1$ we have no constraint for $S_1(v)$ and if $\ell = 1$ we have no constraint for $S_2(v)$.

We associate the system $\mathcal{I}_x$ with a partially oriented graph $I_x$ whose vertex set is $\{x_1(v), x_2(v) \mid v \in V\}$.
For each inequality $a > b$ we have an oriented edge $(a,b)$ in $I_x$, while for each equality $a = b$ we have an undirected edge $ab$ in $I_x$.
We refer to Fig.~\ref{fig:define-Ix} for an illustration of this graph.

\begin{figure}[tb]
 \centering
 \includegraphics[scale=0.7]{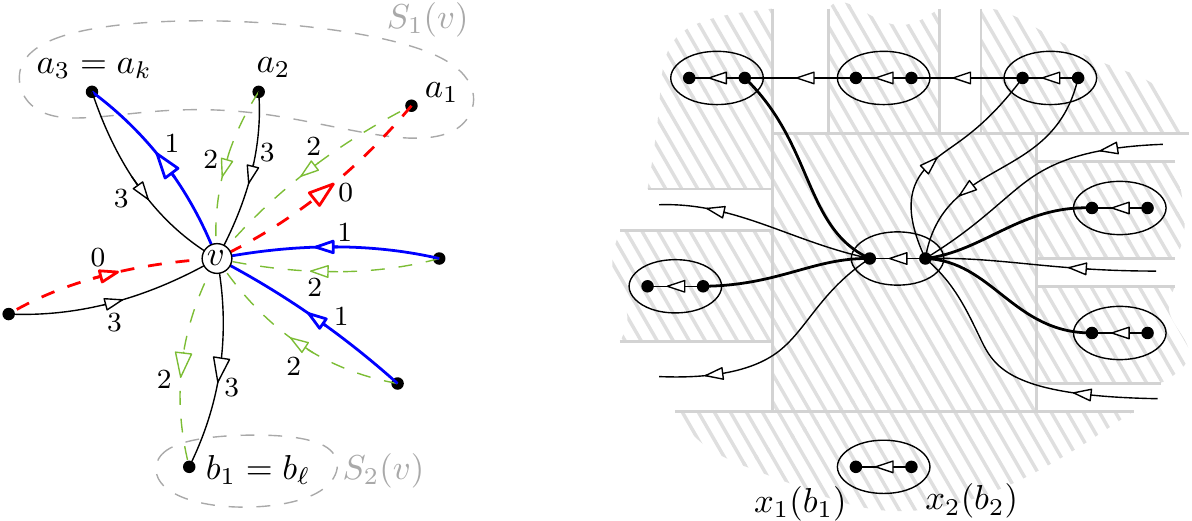}
 \caption{Illustrating the definition of $I_x$ around a vertex $v$. On the right a hypothetical rectangle contact arrangement is indicated.}
 \label{fig:define-Ix}
\end{figure}

\begin{observation}\label{obs:Ix-planar}
 $I_x$ is planar.
\end{observation}
\begin{proof}
 A plane embedding of $I_x$ can be easily defined based on the plane embedding of $G$ as indicated in Fig.~\ref{fig:define-Ix}.
\end{proof}

\begin{observation}\label{obs:continue-path}
 For every inner vertex $u$ of $G$, the vertex $x_1(u)$ in $I_x$ has an incident undirected edge or incident outgoing edge.
 Similarly, $x_2(u)$ has an incident undirected edge or incident incoming edge.
\end{observation}
\begin{proof}
 By symmetry, it is enough to show that $x_1(u)$ has an incident unidirected or outgoing edge.
 Consider the outgoing edge $e$ at $u$ in $\bar{G}$ of color $1$ and its other endpoint $a$.
 If $e$ is a right edge in an edge pair, then $x_1(u) = x_2(a)$ according to Table~\ref{tab:constraints} and hence $x_1(u)x_2(a)$ is an undirected edge in $I_x$.
 If $e$ is a left edge in a bi-directed edge pair, then $x_1(a) < x_1(u)$ according to Table~\ref{tab:constraints} and hence $(x_1(u),x_1(a))$ is a directed edge in $I_x$.
 If $e$ is a left edge in a uni-directed edge pair, then $u$ is contained in the sequence $S_2(a)$, but not the first and not the last in $S_2(a)$.
 Hence $(x_1(u),x_2(b))$ is a directed edge in $I_x$ for the vertex $b$ preceding $u$ in $S_2(a)$.
 Finally, if $e$ is a loose edge, i.e., $a$ corresponds to an inner face of $G$, and $v$ is the neighbor of $u$ in $G$ counterclockwise after $a$, then $u$ is the first vertex in the sequence $S_1(v)$ for vertex $v$ and $|S_1(v)| \geq 2$.
 Hence, $(x_1(u),x_2(b))$ is a directed edge in $I_x$ for the vertex $b$ after $u$ in $S_1(v)$.
\end{proof}

We are now ready to prove Lemma~\ref{lem:labeling-to-rectangles}, i.e., construct a rectangle contact arrangement of $G$ corresponding to the given corner-edge-labeling of $\bar{G}$.

\begin{lemma}\label{lem:labeling-to-rectangles}
 Every corner-edge-labeling of $\bar{G}$ induces a rectangle contact arrangement of $G$.
\end{lemma}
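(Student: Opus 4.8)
The plan is to realize the given corner-edge-labeling by a linear system of constraints on the side coordinates and to prove that this system has a solution. As the excerpt already sets up, no constraint involves both $x$- and $y$-variables, so it suffices to show that $\mathcal{I}_x$ (and, symmetrically, $\mathcal{I}_y$) is feasible; a common solution then defines $R(v)=[x_1(v),x_2(v)]\times[y_1(v),y_2(v)]$. I would first record that the constraints of $\mathcal{I}_x$ imply all $x$-constraints of Table~\ref{tab:constraints}: the bi-directed inequalities and the equalities are taken verbatim, while each uni-directed inequality follows by transitivity from the chain of $S_1(v),S_2(v)$ inequalities together with the equalities at the two ends of the uni-directed pair. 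Hence feasibility of $\mathcal{I}_x$ is exactly what is needed.

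Second, I would reduce feasibility to an acyclicity statement about $I_x$. A system consisting only of equalities $a=b$ and strict inequalities $a>b$ over the reals is feasible if and only if $I_x$ contains no closed walk that traverses every directed edge in its forward direction and every undirected edge in either direction while using at least one directed edge; equivalently, after contracting all undirected (equality) edges of $I_x$ into equality classes, the resulting directed graph $\hat{I}_x$ is acyclic. Indeed, such a bad cycle would force some value to be strictly larger than itself, and conversely, if $\hat{I}_x$ is acyclic then assigning to each class a strictly increasing integer along a topological order satisfies every constraint of $\mathcal{I}_x$.

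The crux, and where I expect the main difficulty, is proving that $\hat{I}_x$ is acyclic; here I would combine the planarity of $I_x$ (Observation~\ref{obs:Ix-planar}) with the continuation property of Observation~\ref{obs:continue-path}. Suppose for contradiction that $\hat{I}_x$ has a directed cycle; lift it to a closed walk in the plane embedding of $I_x$ and choose one bounding a minimal region. Observation~\ref{obs:continue-path} guarantees that from the left side $x_1(u)$ of any inner vertex one can always continue along an undirected or outgoing edge, and symmetrically that $x_2(u)$ always admits an undirected or incoming edge, so that monotone (non-increasing) walks can be extended until they reach the fixed outer sides, which carry the extreme coordinates and behave like sinks and sources. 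I would turn this into a contradiction inside the minimal enclosed region, either by producing a chord that splits it into a smaller bad cycle, or by transferring the configuration back to a simple cycle of $G$ and invoking the edge-counting of Lemma~\ref{lem:pendant-at-cycle}, exactly in the spirit of the proof of Lemma~\ref{lem:orientation-to-labeling}. Getting this incidence bookkeeping right --- relating directed edges of $I_x$ to outgoing edges of $\bar{G}$ that point inside a cycle --- is the delicate part.

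Finally, with feasible solutions of $\mathcal{I}_x$ and $\mathcal{I}_y$ in hand, I would assemble the rectangles and verify the outcome. Inequalities~\eqref{eq:proper-rectangle} ensure that each $R(v)$ is a nondegenerate rectangle; the constraints of Table~\ref{tab:constraints}, now all satisfied, force each edge pair of $\bar{G}$ to realize precisely the prescribed side or corner contact; and the four extremal sides bounding each inner face produce the correct bounded component, so that no spurious contacts arise and the contact graph is exactly $G$ with its given embedding. This shows that the constructed arrangement is a rectangle contact arrangement of $G$ inducing the given corner-edge-labeling, which completes the proof of the lemma.
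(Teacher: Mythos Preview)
Your proposal is correct and follows essentially the same route as the paper: reduce to feasibility of $\mathcal{I}_x$, encode it as the absence of ``bad'' cycles in the planar graph $I_x$, and rule these out by a minimal-region argument using Observation~\ref{obs:continue-path} to build a monotone path that splits the region. Two remarks. First, the paper makes explicit a step your sketch leaves implicit: it verifies directly that every \emph{facial} cycle of $I_x$ fails to be semi-oriented (each inner face is either a directed triangle, or consists of the edge $(x_1(v),x_2(v))$, a directed path on $S_1(v)$ or $S_2(v)$, and two further edges, with the edge and the path oriented inconsistently around the face); this is the base case that makes the minimality argument terminate, and you should state it. Second, your alternative of ``transferring the configuration back to a simple cycle of $G$ and invoking Lemma~\ref{lem:pendant-at-cycle}'' is not what the paper does and is not needed --- the chord-splitting via Observation~\ref{obs:continue-path} already suffices, and translating cycles of $I_x$ into cycles of $G$ with the right edge-count bookkeeping would be extra work with no payoff here.
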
 
\begin{proof}
 Consider the variables $\{x_1(v),x_2(v) \mid v \in G\}$, the system $\mathcal{I}_x$ and the corresponding graph $I_x$ as defined above.
 A $u$-to-$v$ path $P$ in $I_x$ is said to be \emph{semi-oriented} if every edge on $P$ is either undirected or directed towards $v$.
 A cycle $C$ in $I_x$ is \emph{semi-oriented} if $C$ seen as a $u$-to-$u$ path is semi-oriented.
 We shall show that $I_x$ has no semi-oriented cycles, which clearly implies that $\mathcal{I}_x$ has a solution.

 First consider a facial cycle $C$ for some inner face of $I_x$.
 Then $C$ is either a triangle with three directed edges (these faces correspond to the inequalities in the first two rows in Table~\ref{tab:constraints} together with the constraint $x_1(w) < x_2(w)$ from~\eqref{eq:proper-rectangle}), or is constituted by the $(x_1(v),x_2(v))$ a directed path $P$ on $S_1(v)$ or $S_2(v)$ and two further edges.
 In the latter case edge $(x_1(v),x_2(v))$ and path $P$ are not directed consistently along $C$, proving that $C$ is not semi-oriented.

 Now assume for the sake of contradiction that $C$ is some semi-oriented cycle in $I_x$ that has $k$ inner faces of $I_x$ in its interior.
 Moreover, assume that $k$ is minimal among all semi-oriented cycles in $I_x$.
 As argued above, $k \geq 2$, i.e., $C$ has some edges or even vertices in its interior.
 We claim that $C$ has a semi-oriented path in its interior with both endpoints in $V(C)$.
 This is clearly the case when $C$ has a chord in its interior.
 Otherwise, we may assume without loss of generality that $x_1(u)$ for some vertex $u$ if $G$ lies in the strict interior of $C$.

 Next we construct a semi-oriented path $P$ starting at $x_1(u)$, adding one edge at a time until we reach a vertex of $C$ or close a semi-oriented cycle.
 Whenever the current end of $P$ is $x_1(u')$ for some $u'$, we continue with the edge given by Obs.~\ref{obs:continue-path}.
 When the current end of $P$ is $x_2(u')$ for some $u'$, we continue with the edge $(x_2(u'),x_1(u'))$.

 If we close a semi-oriented cycle, it surrounds less than $k$ inner faces, contradicting the choice of $C$.
 So we reach $C$ and start to extend $P$ on the other end symmetrically as follows.
 Whenever the current end of $P$ is $x_2(u')$ for some $u'$, we continue with the edge given by Obs.~\ref{obs:continue-path}.
 When the current end of $P$ is $x_1(u')$ for some $u'$, we continue with the edge $(x_2(u'),x_1(u'))$.
 Again, we cannot create a semi-oriented cycle, thus we reach a vertex of $C$ again.
 But now $C \cup P$ consists of two cycles $C_1,C_2$, each surrounding less than $k$ inner faces.
 As $C$ and $P$ are semi-oriented, least one of $C_1$, $C_2$ is semi-oriented as well, contradicting the choice of $C$.

 It follows that $\mathcal{I}_x$ has a solution and analogously we see that $\mathcal{I}_y$ has a solution.
 Defining for every vertex $v$ of $G$ a rectangle $R(v)$ as $[x_1(v),x_2(v)] \times [y_1(v),y_2(v)]$ gives a rectangle contact arrangement corresponding to the given corner-edge-labeling.
\end{proof}

\section{MTP Graphs are Contact Rectangle Graphs -- Proof of Theorem~\ref{thm:MPT-has-orientation}}
\label{sec:MPT-has-orientation}

Let $G$ be an MTP graph.
We shall show that the closure $\bar{G}$ of $G$ admits a $4$-orientation.
Using Thm.~\ref{thm:triangle-free}, namely the bijection between $4$-orientations of $\bar{G}$ and combinatorial rectangle contact arrangements of $G$, this implies that $G$ admits a rectangle contact arrangement.

\begin{backInTimeThm}{thm:MPT-has-orientation}
\begin{theorem}
 Every MTP-graph has a rectangle contact arrangement and it can be computed in linear time.
\end{theorem}
\end{backInTimeThm}
\begin{proof}
 Let $G$ be an MTP-graph.
 We shall show, by induction on the number $n$ of vertices in $G$, that $\bar{G}$ has a $4$-orientation.
 Then by Thm.~\ref{thm:triangle-free} it follows that $G$ has a rectangle contact arrangement, which moreover can be computed in $O(n)$.
 For $n = 4$, i.e., when $G$ is a $4$-cycle, there is nothing to show.
 So assume that $n \geq 5$.
 We distinguish the following cases.

 \medskip

 \textit{Case 1 -- $G$ has an inner face $f$ bounded by a $4$-cycle $C$:}
 Since $G$ is planar and triangle-free, there are two opposite vertices $u,v$ on $C$ that have distance at least $4$ (counted by number of edges) in $G - E(C)$.
 We consider the plane embedded graph $G'$ obtained from $G$ by identifying $u$ and $v$ along $f$ into a new vertex $x$ and removing one edge between $x$ and each of the two vertices $a,b$ in $C - u,v$, see Fig.~\ref{fig:Case1-face}.
 Note that $G'$ is triangle-free by the choice of $u$ and $v$.
 Moreover, there is a natural correspondence between the faces, angles and vertices of $G'$ and those of $G$, except that the face $f$ is missing in $G'$ and $u,v$ are both represented by $x$.

 \begin{figure}[htb]
  \centering
  \includegraphics[width=\textwidth]{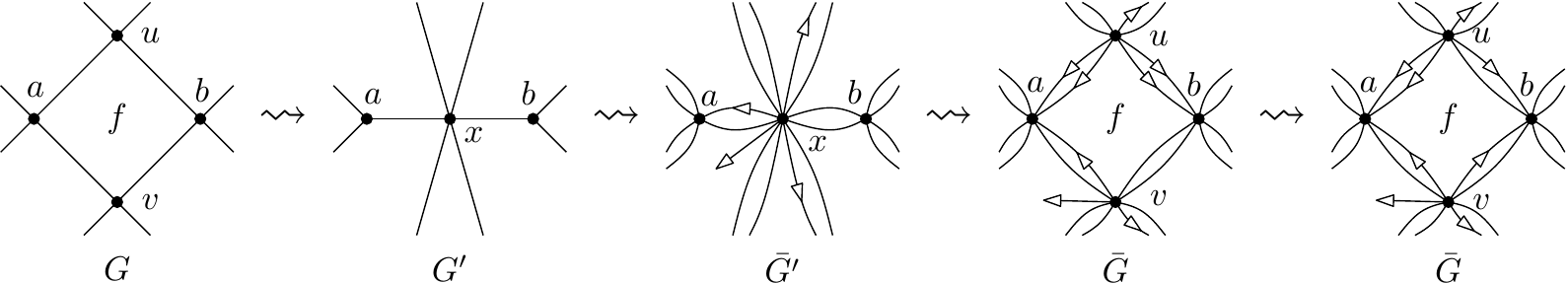}
  \caption{Contracting two opposite vertices at an inner $4$-face of $G$ and obtaining an augmentation with $4$-orientation of $G$ from an augmentation with $4$-orientation of the resulting graph $G'$. Depicting the orientation of edges that are not outgoing at $u$, $v$ or $x$ is omitted.}
  \label{fig:Case1-face}
 \end{figure}

 By induction there is an augmentation $\bar{G'}$ of $G'$ with a $4$-orientation.
 We define an augmentation $\bar{G}$ of $G$ by doubling every edge of $G$ and placing for every an angle of $G$ not in $f$ the same number of pendant edges as in the corresponding angle of $G'$.
 We place no pendant edge in the angles in $f$.
 It is straightforward to check that $\bar{G}$ is indeed an augmentation of $G$.
 Secondly, we define a $4$-orientation of $\bar{G}$ based on the given $4$-orientation of $\bar{G'}$.
 Every edge not in $C$ is oriented the way it is in $\bar{G'}$, every edge at $v$ in $C$ is oriented like the corresponding edges at $x$ in $\bar{G'}$, and every edge at $u$ in $C$ is oriented outgoing at $u$.
 Now, $u$ has outdegree $4+i$ and $v$ has outdegree $4-i$ for some $i \in \{0,1,2,3,4\}$.
 In particular, $v$ has at least $i$ incoming edges in $C$.
 Reversing $i$ edge-disjoint paths of the form $u,w,v$, where $w \in \{a,b\}$, gives a $4$-orientation of $\bar{G}$, see again Fig.~\ref{fig:Case1-face}.

 \medskip

 \textit{Case 2 -- Every inner face of $G$ has length at least $5$:}
 If $G$ has no $4$-cycle at all, consider an arbitrary edge $e = uv$.
 Otherwise, $G$ has a number of $4$-cycles with non-empty interior, which are partially ordered by inclusion of their interior.
 In this case consider an edge $e = uv$ of $G$ strictly in the interior of a $4$-cycle with inclusion-minimal interior.
 In any case, $e$ lies on no $4$-cycle and hence $u$ and $v$ have distance at least $4$ (counted by number of edges) in $G - e$.
 We consider the plane embedded graph $G'$ obtained from $G$ by contracting $e$, i.e., identifying $u$ and $v$ along $e$ into a new vertex $x$, see Fig.~\ref{fig:Case2-edge}.

 \begin{figure}[htb]
  \centering
  \includegraphics{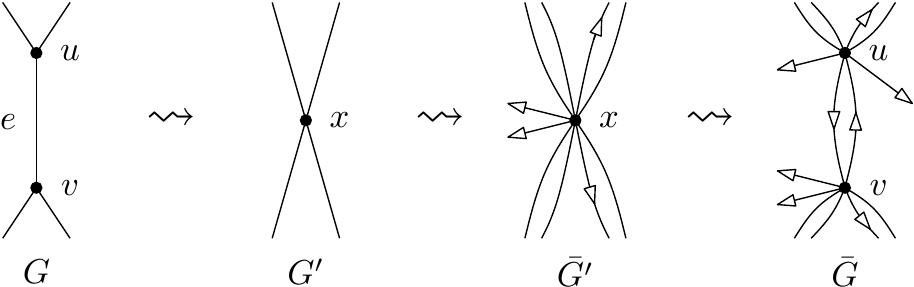}
  \caption{Contracting an edge of $G$ and obtaining an augmentation with $4$-orientation of $G$ from an augmentation with $4$-orientation of the resulting graph $G'$. Depicting the orientation of edges that are not outgoing at $u$, $v$ or $x$ is omitted.}
  \label{fig:Case2-edge}
 \end{figure}

 Similarly to the Case~1 we have by induction an augmentation $\bar{G'}$ of $G'$ with a $4$-orientation.
 We uncontract $x$ back into $u$ and $v$ with two parallel edges in between and place all (if any) pendant edges at $x$ to $v$.
 Without loss of generality assume that $u$ has $i \leq 2$ outgoing edges.
 Into each of the two angles bounded by $e$ we add one pendant edge at $u$.
 Again, it is straightforward to check that this defines an augmentation $\bar{G}$ of $G$.
 We obtain a $4$-orientation of $\bar{G}$ by orienting exactly $i-2$ of two parallel edges between $u$ and $v$ outgoing at $u$, see again Fig.~\ref{fig:Case2-edge}.
\end{proof}

\section{Line-Pierced Rectangle Arrangements and Squarability --\\ Proof of Theorem~\ref{thm:line-pierced}}
\label{sec:line-pierced}

Recall that a rectangle arrangement~$\mathcal R$ is line-pierced if there is a horizontal line $\ell$ that intersects every rectangle in $\mathcal R$.
Note that by the line-piercing property of $\mathcal R$ the intersection graph remains the same if we project each rectangle $R=[a,b] \times [c,d] \in \mathcal R$ onto the interval $[a,b] \subseteq \mathbb R$.
In particular, the intersection graph $G_{\mathcal R}$ of a line-pierced rectangle arrangement $\mathcal R$ is an \emph{interval graph}, i.e., intersection graph of intervals on the real line.

Line-pierced rectangle arrangements, however, carry more information than one-dimensional interval graphs since the vertical positions of intersection points between rectangles do influence the combinatorial properties of the arrangement.
We obtain two squarability results for line-pierced arrangements in Propositions~\ref{pro:lp-tf-squared} and~\ref{pro:lp-unit-squares}, which yield Thm.~\ref{thm:line-pierced}.

\wormholeProp{pro:lp-tf-squared}
\begin{proposition}\label{pro:lp-tf-squared}
	Every line-pierced, triangle-free, and cross-free rectangle arrangement $\mathcal R$ is squarable.
\end{proposition}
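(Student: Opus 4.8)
The plan is to extract the decisive structural consequence of the three hypotheses and then build the squares by induction, inserting one rectangle at a time. First I would record that the intersection graph $G_{\mathcal R}$ is a \emph{forest}: by the line-piercing property $G_{\mathcal R}$ is an interval graph, interval graphs are chordal, and a chordal graph with no triangle has no cycle at all (a shortest cycle of length $\ge 4$ would have a chord, and a $3$-cycle is a triangle). This is exactly what removes the obstructions of Fig.~\ref{fig:unsquarable}. Indeed, a side piercing of $R$ into $R'$ forces the square of the piercing rectangle to be strictly smaller than that of the pierced one, and a containment forces the inner square to be strictly smaller, so every edge of $G_{\mathcal R}$ imposes \emph{at most one} strict inequality between two side lengths; since these directed inequalities form a subgraph of a forest they are acyclic, hence simultaneously satisfiable. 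Corner intersections impose no size constraint, and no crossings occur by hypothesis, so each individual intersection type is realizable by squares and the only possible global obstruction --- a cyclic chain of ``must be smaller'' relations --- cannot arise.

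To turn this into a construction I would induct on $|\mathcal R|$, removing at each step an \emph{extreme leaf}: the rectangle $I$ whose right endpoint $b_I$ along $\ell$ is smallest. Every neighbour $J$ of $I$ satisfies $a_J \le b_I \le b_J$ and hence contains the point $b_I$; as $I$ also contains $b_I$, triangle-freeness forces $I$ to have at most one neighbour, so $I$ is a leaf (or isolated). Moreover $b_J \ge b_I$ shows $I$ cannot strictly contain $J$, so in a containment it is $I$ that is the inner square. The key point making insertion possible is that, since $I$ is a leaf, \emph{no} rectangle other than $J$ has an $x$-range meeting that of $I$; thus the horizontal slot of $I$ is free of all squares except $S(J)$. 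Assuming by induction a combinatorially equivalent square arrangement for $\mathcal R \setminus \{I\}$ in which every square is crossed by $\ell$ and the overlap order of the $x$-ranges is preserved, I would reintroduce a square $S(I)$ crossing $\ell$, placed into this slot with the side length dictated by the $I$--$J$ type (smaller than $S(J)$ when $I$ is contained in or pierces $J$, larger when $I$ is the pierced rectangle) and positioned against $S(J)$ to reproduce the prescribed contact (inside $S(J)$, poking out of the prescribed side, or meeting $S(J)$ at the prescribed corner). Because the slot contains no other square, a correctly sized $S(I)$ meets only $S(J)$.

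The main obstacle is the coupling $\mathrm{width}=\mathrm{height}$ that distinguishes squares from rectangles: the horizontal overlap with $S(J)$, the vertical intersection type, and the requirement that $S(I)$ still cross $\ell$ must all be met by a single side length. The line-piercing hypothesis is precisely what keeps this tractable, since it pins the vertical position of every square to the band around $\ell$ and leaves the side length to absorb the horizontal constraints; I would verify, case by case over containment, the side-piercing directions, and the corner directions, that the interval of admissible side lengths for $S(I)$ is always nonempty, using that $S(J)$ strictly crosses $\ell$ and that the slot width exceeds $s(J)$ by the positive gap to the left neighbour. The remaining work is bookkeeping: phrasing the inductive invariant strongly enough --- combinatorial equivalence, ``every square is cut by $\ell$'', and preservation of the $x$-range order --- so that the freshly inserted square provably avoids all non-neighbours, in the same spirit as the feasibility argument of Lemma~\ref{lem:labeling-to-rectangles}.
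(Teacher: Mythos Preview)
Your structural analysis is correct and matches the paper's: the intersection graph is a forest (the paper sharpens this to a caterpillar, but that is not essential). However, your inductive invariant ``every square is crossed by $\ell$'' is a genuine mistake, not mere bookkeeping. The paper explicitly exhibits (Fig.~\ref{fig:no-lp-squaring}, left) a line-pierced, triangle-free, cross-free arrangement that has \emph{no} line-pierced squaring: take $R$ together with two disjoint leaves $S,T$ that side-pierce $R$ through its top and bottom respectively. In any line-pierced squaring the $y$-intervals of $S(S)$ and $S(T)$ together cover the $y$-interval of $S(R)$, so $s_S+s_T\ge s_R$; but their $x$-intervals are disjoint and contained in that of $S(R)$, forcing $s_S+s_T<s_R$. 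Your induction, removing $S$ first, would at the reinsertion step need $s_S$ large enough to reach from below $\ell$ past the top of $S(R)$, yet small enough to fit in the $x$-slot left of $S(T)$; these are incompatible for exactly this reason.

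The remedy is to drop the line-pierced invariant for the squaring. Your own argument already shows why this is safe: you proved that every rectangle $M\neq J$ has $a_M>b_I$, so in the squaring all non-neighbours are $x$-disjoint from $S(I)$ regardless of vertical position. Hence no vertical pinning is needed; a top/bottom-piercing leaf can be placed with its inner side arbitrarily close to the relevant side of $S(J)$ and with side length as small as the $x$-slot demands. With this change your leaf-peeling induction goes through. The paper avoids the issue differently: it places the backbone squares first with uniform $\varepsilon$-overlaps and then attaches all leaves of each backbone square simultaneously as tiny squares summing to less than its side length. That two-phase construction is more direct and makes the size budget explicit, while your inductive route is more uniform but needs the correction above.
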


\begin{proof}
	First, we observe that the intersection graph $G_{\mathcal R}$ of $\mathcal R$ is a caterpillar, i.e., a tree for which a \emph{backbone path} remains after removing all leaves.
	As observed above, $G_{\mathcal R}$ is an interval graph and triangle-free interval graphs are known to be caterpillars~\cite{e-eig-93}.

	Whether a rectangle $R=[a,b] \times [c,d]$ forms a vertex of the caterpillar's backbone path $P$ or a leaf depends on the intersection patterns of its corresponding interval $[a,b]$ with the other intervals.
	If we have $[a,b] \subseteq [a',b']$ for another rectangle $R' = [a',b'] \times [c',d']$ then $R$ must be a leaf in $G_{\mathcal R}$. Otherwise $[a,b]$ (and thus also $[a',b']$) would need to intersect a third interval and form a triangle.
	The leftmost and rightmost rectangles of $\mathcal R$ may also be leaves, but for any other rectangles $R$ it holds true that there are a left neighbor $R_l = [a_l,b_l] \times [c_l,d_l]$ and a right neighbor $R_r = [a_r,b_r] \times [c_r,d_r]$ such that $a_l < a < b_l < a_r < b < b_r$ and thus $R$ forms a vertex of $P$.

	Because $\mathcal R$ is cross-free, it is clear that a leaf $R$ of $G_{\mathcal R}$ must be a side-piercing rectangle of either the top or the bottom side of its neighboring rectangle $R'$ or be fully contained in $R'$.
	 Two neighboring rectangles of $P$ may be side-piercing on their left or right sides or form a corner intersection.

	We now construct a squaring $\mathcal R'$ of $\mathcal R$.
	It is obvious that the rectangles of the backbone path $P$ (including the leftmost and rightmost rectangles of $\mathcal R$, even if they are leaves) can be realized as combinatorially equivalent squares with horizontal overlaps of $\varepsilon>0$, see Fig.~\ref{fig:caterpillar}.
	We denote the square representing a rectangle $R$ as $S_R$.
	Consider a rectangle $R$ of $P$ and its leaves in $G_{\mathcal R}$.
	Because $\mathcal R$ is triangle-free, we can order the leaves from left to right by their x-intervals.
	We represent all leaves of $R$ by squares of the same side length, which is chosen such that their sum is less than the side length of $S_R$ minus $2 \varepsilon$.
	It is again easy to see that these squares can be placed in the prescribed order and combinatorially equivalent to $\mathcal R$ as illustrated in Fig.~\ref{fig:caterpillar}; hence $\mathcal R$ is indeed squarable.
\end{proof}

\begin{figure}[tb]
	\centering
	\includegraphics{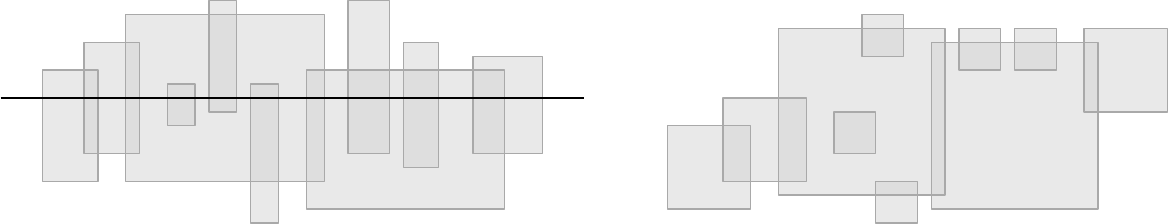}
	\caption{Constructing a combinatorially equivalent squaring from a line-pierce, triangle-free, and cross-free rectangle arrangement.}
	\label{fig:caterpillar}
\end{figure}

There are instances, however, that satisfy the conditions of Prop.~\ref{pro:lp-tf-squared} and thus have a squaring, but not a line-pierced one.
An example is given in Fig.~\ref{fig:no-lp-squaring}.
Assume the square representing $R$ has side length $1$.
In the projection of any square arrangement to the y-axis, the two intervals of $S$ and $T$ must overlap (due to being line-pierced) and their union contains the interval of $R$.
However, when projected to the x-axis, the projections of $S$ and $T$ must be disjoint and they must be contained in the projection of $R$.
This is impossible for squares.

\begin{figure}[tb]
	\centering
	\includegraphics{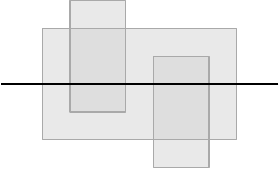}
        \hspace{1.5em}
	\includegraphics{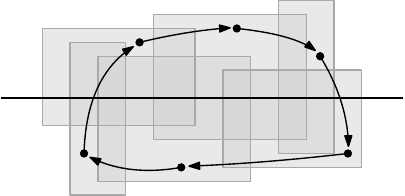}
        \hspace{1.5em}
	\includegraphics{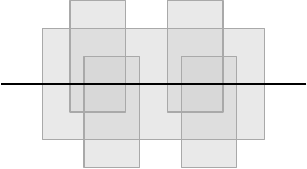}
	\caption{Left: A line-pierced, triangle-free rectangle arrangement that has no line-pierced squaring. Middle: An unsquarable line-pierced rectangle arrangement due to a forbidden cycle of side-piercing intersections. Right: Squaring the two vertical pairs of rectangles on the right implies that the central square would need to be wider than tall.}
	\label{fig:no-lp-squaring}
\end{figure}

Requiring that the line-pierced arrangement $\mathcal R$ is triangle- and cross-free thus is sufficient for its squarability but we cannot guarantee that a line-pierced squaring exists.
Another condition that we have used before is the restriction of the intersection types to corner intersections.
Regardless of the existence of triangles in the intersection graph this allows us to prove an even stronger result.

\begin{proposition}\label{pro:lp-unit-squares}
	Every line-pierced rectangle arrangement $\mathcal R$ restricted to corner intersections is squarable. 
	There even exists a corresponding squaring with unit squares that remains line-pierced.
\end{proposition}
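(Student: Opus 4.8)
The plan is to decouple the horizontal and vertical coordinates, exactly as in the proof of Lemma~\ref{lem:labeling-to-rectangles}, and to exploit two consequences of the hypotheses. First, since $\mathcal R$ is line-pierced, I take the piercing line $\ell$ to be the $x$-axis, so that two rectangles intersect if and only if their $x$-projections overlap; in particular the overlap region of any intersecting pair straddles $\ell$. Second, restricting to corner intersections means that for every intersecting pair neither $x$-projection contains the other and neither $y$-projection contains the other, since a containment in either coordinate would force a containment, side-piercing, or crossing instead. Combining these, \emph{no} $x$-projection contains another anywhere in the arrangement: two nested $x$-projections would force an intersection, which by hypothesis is a corner intersection and hence forbids $x$-containment. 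Thus $\{[a_i,b_i]\}$ is a \emph{proper} interval representation of the interval graph $G_{\mathcal R}$.

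I would then realize the horizontal coordinates with unit intervals. By the classical theorem of Roberts that proper interval graphs coincide with unit interval graphs, and since the unit representation can be taken to respect the left-endpoint order of the proper representation, there are unit intervals $[a_i',a_i'+1]$ with the same pairwise intersection pattern and the same left-to-right order as the $[a_i,b_i]$. For the vertical coordinates I would simply pick reals $c_i'\in(-1,0)$ by any order-preserving map of the original bottom coordinates $c_i$, so that $c_i'<c_j'$ exactly when $c_i<c_j$. Setting $S_i=[a_i',a_i'+1]\times[c_i',c_i'+1]$ yields unit squares, and each $S_i$ meets $\ell$ because $-1<c_i'<0$; hence the squaring is line-pierced and uses unit squares.

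It remains to verify that $\mathcal R'=\{S_i\}$ is combinatorially equivalent to $\mathcal R$, i.e. that it reproduces the ordered crossing data on every side (item~\textbf{(1)} of the definition of a combinatorial arrangement; there are no containments to track). For a single intersecting pair this is immediate: among two unit squares that overlap in $x$ and both meet $\ell$, neither can contain, pierce, or cross the other, so their intersection is automatically a corner intersection, and its type is fixed by the preserved left-to-right order (from the $a_i'$) together with the preserved up-down order (from the $c_i'$); these match $\mathcal R$ because for a corner-intersecting pair the bottom-order and top-order already agree. The crux is the order of crossings along a fixed side, say the right side of $S_i$: a right neighbor lying above $R_i$ crosses this side with its bottom edge and one lying below with its top edge, and any two rectangles crossing this side either intersect each other, so their relative order is pinned by their own (preserved) corner-intersection type, or are disjoint, so their order is pinned by (preserved) disjointness. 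The single mixed case, an upper neighbor crossing at its bottom versus a lower neighbor crossing at its top, is settled uniformly by the sign pattern $c_i'\in(-1,0)$ and $c_i'+1\in(0,1)$, which forces every ``bottom'' crossing below every ``top'' crossing in $\mathcal R'$, matching the analogous inequality that line-piercing forces in $\mathcal R$; the left, top, and bottom sides follow by the symmetric argument.

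The main obstacle is precisely this last verification: matching not merely the pairwise intersection types but the full ordered crossing sequences along each side. The pairwise step and the realizability of the horizontal coordinates are routine once Roberts' theorem is invoked; it is the side-order argument where the line-piercing (which makes all overlaps straddle $\ell$) and the corner-only hypothesis (which makes bottom- and top-orders agree on adjacent pairs) must be combined carefully to rule out any reordering of crossings.
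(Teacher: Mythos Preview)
Your argument is correct and reaches the same conclusion as the paper, but by a genuinely different route. The paper proceeds by induction on $|\mathcal R|$: it removes the leftmost rectangle $L$, obtains a line-pierced unit-square squaring of the rest by hypothesis, and then reinserts $L$ as a unit square by picking $b'\in(a_k',b_1')$ horizontally and slotting $c'$ between the $c$-coordinates of $L$'s immediate vertical neighbours. In effect the paper reproves the order-preserving form of Roberts' theorem inline while simultaneously handling the $y$-coordinates. You instead decouple the two axes once and for all: invoke Roberts' theorem (in its order-preserving form) for the $x$-projections, and send all bottom coordinates into $(-1,0)$ by an order-preserving map. This is cleaner and more conceptual; the price is that you must then verify the full side-crossing orders in one shot rather than maintaining them incrementally, and you rely on an external theorem where the paper is self-contained.

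Two remarks on the verification. First, on a vertical side of $R_i$ any two crossers in fact \emph{must} intersect one another (their $x$-projections both contain the endpoint $b_i$ or $a_i$), so your ``or are disjoint'' clause is vacuous there and the sign-pattern argument is already implied by the preserved corner-intersection type; it is on the horizontal sides that the disjoint case genuinely occurs, when an upper-left and an upper-right neighbour of $R_i$ miss each other. Second, the phrase ``the top and bottom sides follow by the symmetric argument'' is a little loose: the sign-pattern step is specific to the vertical sides, whereas on the horizontal sides the mixed case (a left neighbour's $b_j$ versus a right neighbour's $a_k$) is settled purely by your intersect-or-disjoint dichotomy, not by any sign. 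Since you already stated that dichotomy in full generality, this is only a wording issue, not a gap.
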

\begin{proof}
	From the restriction to corner intersections we can derive certain properties on the x- and y-orders of the boundaries of two intersecting rectangles.
	Let $R=[a,b] \times [c,d]$ and $R'=[a',b'] \times [c',d']$ be two corner-intersecting rectangles and without loss of generality assume $a<a'$.
	This implies that the order of the left and right sides of $R$ and $R'$ is $a<a'<b<b'$.
	In particular, the horizontal order of the left sides and the horizontal order of the right sides of the rectangles in $\mathcal R$ imply the same total order $\prec$ of $\mathcal R$.\footnote{It is known that this implies that $G$ is a unit-interval graph. However, it is instrumental to repeat the proof here.}
	For the vertical order of two intersecting rectangles there are the two alternatives $c<c'<d<d'$ ($R$ is to the lower left of $R'$) or $c'<c<d'<d$ ($R$ is to the upper left of $R'$) and the horizontal piercing line $\ell$ is placed between $\max \{c,c'\}$ and $\min \{d,d'\}$.

	We now use induction to construct a line-pierced squaring of $\mathcal R$ with unit squares.
	It is obvious that the statement of the proposition holds true for $|\mathcal R| = 1$.
	So let $n>1$ and consider the arrangement $\mathcal R' = \mathcal R \setminus \{L\}$, where $L = [a,b] \times [c,d]$ is the leftmost rectangle in $\mathcal R$.
	Since $|\mathcal R'| = n-1$ a line-pierced squaring $\mathcal S'$ with unit squares exists.
	We add the unit square $L'$ representing $L$ to $\mathcal S'$ by first determining its horizontal position and then its vertical position.

	Let $A_1, \dots, A_k$ be the rectangles intersecting $L$ in $\mathcal R$,  ordered from left to right according to the total order $\prec$, where each rectangle $A_i$ is specified as $A_i = [a_i,b_i] \times [c_i,d_i]$.
	Since $\mathcal R$ is line-pierced, $L$ and $A_1, \dots, A_k$ form a clique and we know that $a < a_1 < \dots < a_k < b < b_1 < \dots < b_k$.
	By the induction hypothesis, the squaring $\mathcal S'$ satisfies that (i) the coordinates of the corresponding squares $A'_1, \dots, A'_k$ with $A'_i = [a'_i,b'_i] \times [c'_i,d'_i]$ are ordered as $a'_1 < \dots < a'_k <  b'_1 < \dots < b'_k$ and that (ii) $b'_i-a'_i = 1$ for each $i=1, \dots, k$.
	We set the right x-coordinate $b'$ of $L'$ such that $a'_k < b' < b'_1$ and the left x-coordinate $a' = b' - 1$.
	Since $a'_1 = b'_1 - 1$ and $b' < b'_1$ it follows that $a' < a'_1$.
	This already implies that the intersection graphs of $\mathcal R$ and $\mathcal S = \mathcal S' \cup \{L'\}$ are identical as long as $\mathcal S$ is line-pierced and thus intersects all $A'_1, \dots A'_k$.

	It remains to determine the y-coordinates $c'$ and $d'$ of $L'$.
	Since $L$ and its neighbors form a clique in $\mathcal R$ they can again be totally ordered by either their lower or their upper y-coordinates and the resulting two orders coincide due to the fact that we have only corner intersections.
	In particular, there are two rectangles $A_i$ and $A_j$ ($1 \le i,j \le k$) that are the immediate neighbors of $L$ in that order, where $c_i < c < c_j < d_i < d < d_j$ (unless $L$ is top- or bottommost).
	Again, by the induction hypothesis, the vertical order of the unit squares $A'_1, \dots A'_k$ in $\mathcal S'$ is identical to the vertical order of $A_1, \dots, A_k$ in $\mathcal R$.
	Furthermore, $d'_l - c'_l = 1$ for $l=1, \dots, k$ and hence we can select $[c',d']$ such that $d'-c'=1$, the y-coordinates satisfy $c'_i < c' < c'_j <  d'_i < d' < d'_j$, and $\ell$ lies between $c'$ and $d'$.
	This concludes the proof and also immediately yields a linear-time construction algorithm for a sorted input arrangement~$\mathcal R$. %
\end{proof}

If we drop the restrictions to corner intersections and triangle-free arrangements, we can immediately find unsquarable instances, either by creating cyclic "`smaller than"' relations or by introducing intersection patterns that become geometrically infeasible for squares.
Two examples are given in Fig.~\ref{fig:no-lp-squaring}. 

\section{Conclusions}
\label{sec:conclusions}

We have introduced corner-edge-labelings, a new combinatorial structure similar to Schnyder realizers, which captures the combinatorially equivalent maximal rectangle arrangements with no three rectangles sharing a point.
Using this, we proved that every triangle-free planar graph is a rectangle contact graph.
We also introduced the squarability problem, which asks for a given rectangle arrangement whether there is a combinatorially equivalent arrangement using only squares.
We provide some forbidden configuration for the squarability of an arrangement and show that certain subclasses of line-pierced arrangements are always squarable.
It remains open whether the decision problem for general arrangements is NP-complete.

Surprisingly, every unsquarable arrangement that we know has a crossing or a side-piercing.
Hence we would like to ask whether every rectangle arrangement with only corner intersections is squarable.
Another natural question is whether every triangle-free planar graph is a square contact graph.

\end{document}